\documentclass[article]{IEEEtran}
\usepackage{amsmath}
\usepackage{amsfonts}
\usepackage{amssymb}
\usepackage{amsthm}
\usepackage{tabularx}
\usepackage{mathrsfs} 

\usepackage{url}
\usepackage[colorlinks]{hyperref}
\newtheorem{theorem}{Theorem}

\newtheorem{lemma}{Lemma}

\usepackage{stfloats}
\usepackage{float}
\usepackage{graphicx}
\usepackage{xcolor}
\usepackage{subfigure}
\usepackage{array}
\makeatletter
\newcommand{\thickhline}{%
    \noalign {\ifnum 0=`}\fi \hrule height 1pt
    \futurelet \reserved@a \@xhline
}
\newcolumntype{"}{@{\hskip\tabcolsep\vrule width 1pt\hskip\tabcolsep}}
\makeatother

\makeatletter
\def\blfootnote{\xdef\@thefnmark{}\@footnotetext}
\makeatother

\begin{document}

\title{Enormous Fluid Antenna Systems (E-FAS) for Multiuser MIMO: Channel Modeling and Analysis}
	
\author{Farshad~Rostami~Ghadi,~\IEEEmembership{Member},~\textit{IEEE}, 
            Kai-Kit~Wong,~\IEEEmembership{Fellow},~\textit{IEEE},\\
            Masoud~Kaveh,~\IEEEmembership{Member},~\textit{IEEE}, 
            Wee~Kiat~New,~\IEEEmembership{Member},~\textit{IEEE}, 
            Chan-Byoung Chae,~\IEEEmembership{Fellow},~\textit{IEEE}, 
            and~Lajos~Hanzo,~\IEEEmembership{Life Fellow},~\textit{IEEE}
}

\maketitle

\blfootnote{The work of  F. Rostami Ghadi, K. K. Wong, and W. K. New is supported by the Engineering and Physical Sciences Research Council (EPSRC) under Grant EP/W026813/1.}
\blfootnote{The work of C.-B. Chae is supported by the Institute for Information and Communication Technology Planning and Evaluation (IITP)/National Research Foundation of Korea (NRF) grant funded by the Ministry of Science and ICT (MSIT), South Korea, under Grant RS-2024-00428780 and 2022R1A5A1027646.}
\blfootnote{The work of L. Hanzo is supported by the Engineering and Physical Sciences Research Council (EPSRC) projects Platform for Driving Ultimate Connectivity (TITAN) under Grant EP/X04047X/1 and Grant EP/Y037243/1.}

\blfootnote{\noindent F. Rostami Ghadi, K. K. Wong, and W. K. New are with the Department of Electronic and Electrical Engineering, University College London, WC1E 7JE London, United Kingdom. K. K. Wong is also affiliated with Yonsei Frontier Lab., Yonsei University, Seoul, Republic of Korea (e-mail: $\rm \{f.rostamighadi, kai\text{-}kit.wong\}@ucl.ac.uk$, $\rm aven.wknew@yahoo.com$).}
\blfootnote{\noindent M. Kaveh is with the Department of Information and Communication Engineering, Aalto University, Espoo, Finland. (e-mail: $\rm masoud.kaveh@aalto.fi$).}
\blfootnote{\noindent C.-B. Chae is with the School of Integrated Technology, Yonsei University, Seoul, 03722, Republic of Korea (e-mail: $\rm cbchae@yonsei.ac.kr$).}
\blfootnote{\noindent L. Hanzo is with the School of Electronics and Computer Science, University of Southampton, Southampton, United Kingdom (e-mail: $\rm lh@ecs.soton.ac.uk$).}
\blfootnote{\noindent Corresponding Author: Kai-Kit Wong.}

\begin{abstract}
Enormous fluid antenna systems (E-FAS), the system concept that utilizes position reconfigurability in the large scale, have emerged as a new architectural paradigm where intelligent surfaces are repurposed from passive smart reflectors into multi-functional electromagnetic (EM) interfaces that can route guided surface waves over walls, ceilings, and building facades, as well as emit space waves to target receivers. This expanded functionality introduces a new mode of signal propagation, enabling new forms of wireless communication. In this paper, we provide an analytical performance characterization of an E-FAS-enabled wireless link. We first develop a physics-consistent end-to-end channel model that couples a surface-impedance wave formulation with small-scale fading on both the base station (BS)-surface and launcher-user segments. We illustrate that the resulting effective BS-user channel remains circularly symmetric complex Gaussian, with an enhanced average power that explicitly captures surface-wave attenuation and junction losses. For single-user cases with linear precoding, we derive the outage probability and ergodic capacity in closed forms, together with high signal-to-noise ratio (SNR) asymptotics that quantify the gain of E-FAS over purely space-wave propagation. For the multiuser case with zero-forcing (ZF) precoding, we derive the distribution of the signal-to-interference-plus-noise ratio (SINR) and obtain tractable approximations for the ergodic sum-rate, explicitly revealing how the E-FAS macro-gain interacts with the BS spatial degrees of freedom (DoF). In summary, our analysis shows that E-FAS preserves the diversity order dictated by small-scale fading while improving the coding gain enabled by cylindrical surface-wave propagation.
\end{abstract}

\begin{IEEEkeywords}
Fluid antenna system (FAS), surface wave communication, multiuser MIMO, shadowing.
\end{IEEEkeywords}

\vspace{-2mm}
\section{Introduction}
\subsection{Background}
\IEEEPARstart{T}{he ever-growing} demand of network capacity, the more ambitious targets on various key performance indicators (KPIs) and the constant push by emerging services such as augmented or extended reality, holographic communication, and internet-of-things (IoT), etc., continue to put pressure on existing mobile communication technologies, heading towards the sixth generation (6G) \cite{Tariq-2020,jiang2021theroad,wang2023ontheroad}. Current physical layer has been relying on multiuser multiple-input multiple-output (MU-MIMO) to provide the needed capacity \cite{Villalonga2022spectral}. As a matter of fact, since the pioneering work by Foschini \cite{Foschini-1996} and Telatar \cite{Telatar-1999} in the late 90s, MIMO has become a principal technology in the physical layer. MIMO then evolved into MU-MIMO in the fourth generation (4G) \cite{wong2000opt,wong2002per,wong2003jcd,Vishwanath-2003,Spencer-2004}, becoming so powerful that saw it to scale up in the massive version in the fifth generation (5G) \cite{Marzetta-2010,Larsson-2014}, from $8$ antennas in 4G to $64$ antennas in 5G at the base station (BS).

Looking to delivering 6G, there has been strong interest to deploy even more antennas at the BS \cite{10379539} and migrate toward higher carrier frequencies \cite{rap2013mil}. However, the latter increases the vulnerability of wireless links due to blockage, penetration loss, and unfavorable propagation conditions \cite{mol2012wire}. It is worth noting that the MU-MIMO precoding with a larger array will increase the peak-to-average power ratio (PAPR) even more, which greatly degrades the efficiency of power amplifiers and increases the power consumption. With blockage in the high frequency bands, there is a legitimate concern that the power consumption of the BS cannot keep up and the communication performance would be considerably limited, especially when the line-of-sight (LoS) is obstructed, and where conventional space wave transmission suffers from severe path loss due to spherical wave spreading and few diffraction capabilities \cite{heath2016an}.

To tackle this, reconfigurable intelligent surfaces (RIS), e.g., \cite{Liaskos-2018,Zhao-2019,huang2019reconfigurable,basar2019wireless}, and their extensions such as simultaneous transmitting and reflecting RIS (STAR-RIS) \cite{mu2022sim}, have been widely investigated in recent years, as a means of reshaping the wireless propagation environment by programmable phase and amplitude control on the incident electromagnetic (EM) waves for `smart' reflections. Though these architectures can redirect or partially forward energy toward desired locations, their operations are fundamentally based on three-dimensional (3D) space wave propagation. Therefore, the achievable performance gains are still constrained by spherical power decay and, in many situations, by the doubly fading and path loss effects as well. Additionally, the reliance on fine-grained EM control and accurate channel state information (CSI) across large surfaces introduces severe system complexity, particularly in dense multiuser deployments \cite{An-2022}.

While MU-MIMO systems should continue to dominate the physical layer for years to come, it is evident that we need other technologies to assist and complement MU-MIMO. In other words, we should seek to enhance its capability without the need of increasing the number of antennas. On the other hand, although it is widely accepted that RIS can be deployed to create a more favorable environment for communication, the reality is that RISs are too complex and expensive to operate. The indication is that existing RIS technologies are not ready to be included in the 6G cycle until further breakthroughs.

\subsection{Reconfigurable Antenna-aided Communications}
One possible technology to elevate MIMO is the adoption of reconfigurable antennas \cite{Bernhard-2007} and the first effort dates back to early 2000s by Cetiner {\em et al.}~in \cite{1367557}. However, fast forward to today, reconfigurable antennas are still yet to revolutionize wireless communication technologies. This stems from a gap between existing reconfigurable antenna capabilities and the capabilities needed to transform wireless communications. To address this, Wong {\em et al.}~introduced the fluid antenna system (FAS) \cite{wong2021fluid,wong2020perfrom} which is a hardware-agnostic system concept that considers the antenna as a reconfigurable physical-layer resource to broaden system design and network optimization, and inspire next-generation reconfigurable antennas \cite{new2025tut,Lu-2025,hong2025contemporary,new2025flar,wu2024flu}. Basically, FAS represents all forms of reconfigurable antenna-aided wireless communications systems. In practice, FAS may be implemented by movable elements \cite{zhu2024historical}, liquid-based antennas \cite{shen2024design,Shamim-2025}, metamaterial-based antennas \cite{Zhang-jsac2026,Liu-2025arxiv,liu2025meta}, reconfigurable pixels 
\cite{zhang2024pixel,tong-2025pixel,Wong-wc2026}, etc. In \cite{tong2025designs}, Tong {\em et al.}~compared various FAS prototypes.

After the seminal work in \cite{wong2021fluid,wong2020perfrom}, numerous researches have emerged to understand the diversity benefits of antenna position reconfigurability. For instance, in \cite{kham2023new}, an analytical approximation was proposed to handle the correlation among FAS ports. Copula techniques have also been applied to study the performance of FAS under arbitrary fading channels \cite{rostami2023copula}. The diversity order for the FAS channel under Rayleigh fading was characterized in \cite{new2024fluida}. A particular difficulty in analyzing FAS is the need to address the spatial correlation. To tackle this issue, \cite{espin2024new} introduced the spatial block-correlation model that can keep the analysis tractable while preserving accuracy. In \cite{G10_new2024MIMO-FAS}, FAS was considered at both ends, giving rise to the MIMO-FAS channel and it was illustrated that extraordinary diversity gains can be obtained over the fixed MIMO system. Moreover, continuous FAS was also studied recently in \cite{psomas2023con}. Besides single-user systems, FAS can also combine MIMO as an additional degree of freedom (DoF) to enhance multiuser communications, see e.g., \cite{xu2025capacity,cheng2024sum}. CSI is important to the effective operation of FAS. Recently, this was investigated by exploiting channel sparsity \cite{xu2024channel}, learning-based methods \cite{zhang2024learning}, Nyquist sampling and maximum likelihood estimation \cite{new2025channel}, and successive Bayesian reconstruction \cite{zhang2023successive}.

Another line of research sees FAS synergize RIS-assisted communications. One direction is to consider the use of FAS at the mobile user receiver in a RIS-aided channel \cite{ghadi2024on,G16_Lai2024FAS-RIS,G17_Yao2025FAS-RIS}. More interestingly, the FAS concept can be applied to empower each element of RIS with position reconfigurability. This has led to the fluid RIS (FRIS) system \cite{salem2025first,xiao2025fluid,ghadi2025perfrom,Xiao-twc2026}. It is also possible to apply the FAS approach into STAR-RIS, giving rise to the fluid integrated reflecting and emitting surfaces (FIRES) \cite{ghadi2025fires}. These approaches extend the concept of fluid antennas from point-like elements to metasurface-based structures, enabling the dynamic reconfiguration of reflecting and radiating elements across a surface. While both FRIS and FIRES offer additional flexibility in shaping the radiation and reflection characteristics of the surface, they remain inherently power-inefficient because space wave spreads over 3D space and sophisticated signal processing needs to be done to form focused beams towards the target receivers.

To fundamentally address the limitations of wireless communication, \cite{wong2021vision} proposed a new paradigm in which intelligent surfaces are leveraged not only as emitters, but as engineered propagation media. Such surface-confined propagation dramatically reduces path-loss, thereby improving power efficiency, and limits unintended radiation, hence mitigating over-the-air interference. Additionally, since signal behavior is largely dictated by surface geometry, interference management becomes much more predictable, while the guided nature of propagation enables highly reliable links that closely resemble persistent LoS conditions \cite{liu2024path,chu2024on,chu2025on}. This paradigm was recently referred to as an enormous FAS (E-FAS) in \cite{wong2025efas}. Specifically, an ensemble of intelligent surfaces each capable of acting as surface-wave propagation media for routing and emitters for signal transmission, is interpreted as an enormous version of FAS, exploiting position reconfigurability in the large scale. E-FAS departs fundamentally from conventional approaches by shifting the focus from antenna or surface reconfiguration to propagation reconfiguration. Rather than treating intelligent surfaces as passive reflectors, E-FAS reimagines surfaces as guided EM interfaces that convert incident space waves into surface waves and vice versa. These surface waves propagate along engineered surfaces with cylindrical spreading, leading to substantially reduced attenuation compared to conventional space wave 3D propagation. By routing energy along walls, ceilings, or building exteriors and radiating only in the final hop, E-FAS enables LoS-like connectivity even in shadowed environments, greatly improving power efficiency.

\subsection{Motivation and Contributions}
Despite the attractive propagation advantages envisaged by E-FAS, its performance is not well understood, with only a few results based on computer simulations for signal-to-noise ratio (SNR). A fundamental understanding of how surface wave-assisted propagation translates into classical communication-theoretic performance metrics is missing. Moreover, there is no analytical characterization of the effective end-to-end channel statistics induced by E-FAS, nor of its KPIs such as outage probability, ergodic capacity, and sum-rate under MU-MIMO precoding. Different from \cite{wong2025efas}, this paper provides the first closed-form, end-to-end statistical characterization of E-FAS-aided channels and their impact on communication-theoretic metrics. The results allow rigorous benchmarking, scalability analysis, and system-level insights that are not accessible via EM simulations alone. In particular, our objective is to develop a tractable self-contained performance analysis framework for E-FAS-aided wireless systems from an information-theoretic perspective, and bridge the E-FAS model with established tools in random matrix theory and wireless performance analysis. The distinctive contributions of our work, compared to the literature, are highlighted in Table \ref{tab:contrast}. Technically speaking, our contributions are summarized as follows:
\begin{itemize}
\item \underline{Equivalent channel modeling}---We reformulate the original E-FAS signal model into a compact end-to-end channel representation that captures the combined effect of surface wave-assisted propagation and small-scale fading. Under justifiable assumptions, we show that the resulting equivalent BS-user equipment (UE) channel follows a complex Gaussian distribution with an enhanced average power that reflects cylindrical surface wave attenuation.
\item \underline{Single-user performance analysis}---For single-user transmission with linear transmit precoding, we derive closed-form expressions for the outage probability and ergodic capacity. High-SNR asymptotic analysis reveals that E-FAS provides a power gain, while preserving the diversity order of the underlying fading channel.
\item \underline{Multiuser zeroforcing (ZF) analysis}---For the downlink, we analyze ZF precoding and characterize the distribution of the post-processing signal-to-interference-plus-noise ratio (SINR). Based on this characterization, we derive accurate approximations for the ergodic sum-rate that capture the impact of the E-FAS on the channel gain and the excess spatial DoF available at the transmitter.
\item \underline{Benchmarking and validation}---A baseline without E-FAS is considered by fixing the direct link large-scale gain, facilitating fair and physically meaningful performance comparisons that isolate the impact of surface wave-assisted propagation. Extensive Monte-Carlo simulations are used to validate all analytical results, and our findings demonstrate that E-FAS yields substantial gains in outage performance, ergodic capacity, and multiuser spectral efficiency under practical system parameters.
\end{itemize}

\subsection{Organization and Notations}
The remainder of this paper is organized as follows. Section~\ref{sec:system_model} introduces the system and channel model. Section~\ref{sec:eq_channel} derives the equivalent end-to-end channel distribution of E-FAS-assisted propagation. Then Sections~\ref{sec:single_user} and~\ref{sec:multiuser} analyze the single-user and multiuser transmission scenarios, respectively. Our numerical results are presented in Section~\ref{sec:numerical}, and concluding remarks are given in Section~\ref{sec:conclusion}.

{\em Notations}---Scalars, vectors, and matrices are denoted by lowercase letters, bold lowercase letters, and bold uppercase letters, respectively. The operators $(\cdot)^T$, $(\cdot)^H$, and $\mathrm{tr}(\cdot)$ denote transpose, Hermitian transpose, and trace. The expectation operator is denoted by $\mathbb{E}\{\cdot\}$. The notation $\mathcal{CN}(\boldsymbol{\mu},\mathbf{C})$ represents a circularly symmetric complex Gaussian distribution with mean $\boldsymbol{\mu}$ and covariance matrix $\mathbf{C}$. The identity matrix of appropriate dimension is denoted by $\mathbf{I}$. The Euclidean norm is denoted by $\|\cdot\|$. Unless otherwise stated, $\log(\cdot)$ and $\log_2(\cdot)$ denote the natural and base-$2$ logarithms, respectively. The main symbols and variables used throughout are summarized in Table \ref{tab:not}.

\begin{table*}[]
\centering
\caption{Contrasting our contributions to the existing literature.}\label{tab:contrast}
\renewcommand{\arraystretch}{1.15}
\setlength{\tabcolsep}{5pt}
\small
\begin{tabular}{p{6.2cm}||c|c|c|c|c|c|c|c|c}
		\thickhline
		\textbf{Features} & \textbf{\cite{wong2025efas}} & \textbf{\cite{wong2021vision}} & \textbf{\cite{liu2024path}} & \textbf{\cite{chu2024on}} & \textbf{\cite{chu2025on}} & 
		\textbf{\cite{salem2025first}}& \textbf{\cite{ghadi2025perfrom}} & \textbf{\cite{ghadi2025fires}} &  \textbf{This paper} \\
		\thickhline
		E-FAS & $\checkmark$ &   &   &   &   &  & &   & $\checkmark$ \\
	\hline
		 Cylindrical surface wave propagation & $\checkmark$ & $\checkmark$ & $\checkmark$ & $\checkmark$ & $\checkmark$ & &  &   & $\checkmark$ \\
		\hline
		Environment-embedded reconfigurable surfaces & $\checkmark$ & $\checkmark$  & $\checkmark$ & $\checkmark$ & $\checkmark$ & $\checkmark$& $\checkmark$ & $\checkmark$ & $\checkmark$ \\
		\hline
		
		Geometrically reconfigurable radiating aperture & $\checkmark$ & $\checkmark$  & $\checkmark$ & $\checkmark$ & $\checkmark$ & $\checkmark$& $\checkmark$ & $\checkmark$ & $\checkmark$ \\
		\hline
		
		Statistical channel modeling &   &   &   &   &   & & $\checkmark$ &   & $\checkmark$ \\
		\hline
		Outage probability &   &   &   &   &  & & $\checkmark$ &   & $\checkmark$ \\
		\hline
		Sum-rate/ergodic rate &   &   &   &   &   & $\checkmark$ & $\checkmark$ & $\checkmark$ & $\checkmark$ \\
		\thickhline
\end{tabular}
\end{table*}

\begin{table}[]	
	\centering
	\caption{Summary of Main Notations}\label{tab:not}
	\begin{tabular}{ll}
		\hline
		\textbf{Symbol} & \textbf{Description} \\
		\hline
		$M$ & Number of antennas at the BS \\
		$K$ & Number of UEs \\
		$\mathbf{x}$ & Transmit signal vector at the BS \\
		$x_u$ & Information symbol intended for UE $u$ \\
		$P$ & Total BS transmit power \\
		$P_u$ & Transmit power allocated to UE $u$ \\
		$\mathbf{w}_u$ & Precoding vector for UE $u$ \\
		$\mathbf{H}_{\mathrm{BS\textnormal{-}sur}}$ & BS-to-surface space wave channel matrix \\
		$\mathbf{H}_{\mathrm{sur}}$ & Surface wave propagation channel  \\
		$\mathbf{W}_{\mathrm{relay}}$ & Surface launcher / relay processing matrix \\
		$\mathbf{h}_{\mathrm{relay\textnormal{-}UE},u}$ & Launcher-to-UE channel vector \\
		$h_u^{\mathrm{dl}}$ & Direct BS-to-UE channel coefficient \\
		$\mathbf{h}_u^{\mathrm{eq}}$ & Equivalent end-to-end BS-to-UE channel vector \\
		$h_u^{\mathrm{eq}}$ & Effective channel for UE $u$ \\
		$\beta_{DL}$ & Large-scale gain of the direct BS-to-UE link \\
		$\Omega_{\mathrm{sw}}$ & Surface wave-assisted channel gain \\
		$\Omega_{\mathrm{eq}}$ & Equivalent end-to-end channel variance \\
		$\sigma^2$ & Thermal noise variance at the UE \\
		$\sigma_{\mathrm{eff}}^2$ & Effective noise variance including relay noise \\
		$\gamma_u$ & Instantaneous received SINR at UE $u$ \\
		$\gamma_u^{\mathrm{ZF}}$ & Post-ZF SINR for UE $u$ \\
		$\rho$ & Average SNR parameter \\
		$\mathbf{H}_{\mathrm{eq}}$ & Equivalent BS-to-UE channel matrix \\
		$\mathbf{G}$ & Gram matrix $\mathbf{H}_{\mathrm{eq}}^H \mathbf{H}_{\mathrm{eq}}$ \\
		$m$ & ZF diversity parameter \\
		$R_0$ & Target transmission rate  \\
		$P_{\mathrm{out}}$ & Outage probability \\
		$C$ & Ergodic capacity \\
		$R_{\mathrm{sum}}^{\mathrm{ZF}}$ & ZF ergodic sum-rate  \\
		\hline
	\end{tabular}
\end{table}

\section{System and Channel Model}\label{sec:system_model}
Consider the downlink system as shown in Fig.~\ref{fig:efas_model}, in which an $M$-antenna BS communicates with $K$ single-antenna UEs in an environment equipped with an E-FAS. The E-FAS comprises distributed metasurface tiles mounted on surrounding structures. These tiles convert incident space waves into guided surface waves that propagate along engineered pathways and are eventually re-radiated toward the intended UEs through programmable launchers \cite[Section III]{Shojaeifard-2022}. This layered propagation structure yields a composite channel formed by 
\begin{itemize}
\item A BS-to-surface excitation segment, 
\item A deterministic surface wave propagation stage, 
\item A launcher interface, and 
\item A final space wave hop to each UE.
\end{itemize}

\subsection{Transmit Signal Model}
For the model considered, let $x_u\in\mathbb{C}$ denote the data symbol intended for UE~$u$, normalized such that $\mathbb{E}\{|x_u|^2\}=1$. The BS applies a linear precoding $\mathbf{w}_u\in\mathbb{C}^{M}$ to this symbol, resulting in the transmit vector
\begin{align}
\mathbf{x}= \sum_{u=1}^{K} \sqrt{P_u}\,\mathbf{w}_u x_u ,
\end{align}
where $P_u$ is the power allocated to UE~$u$ and $\sum_{u=1}^{K} P_u = P$ is the power budget of the BS. Unless otherwise stated, $P_u=P/K, \forall u$, is adopted for analytical clarity.

\subsection{BS-to-Surface Excitation}
The signal emitted by the BS illuminates the metasurface tiles and excites the guided-wave interface. The channel between the BS array and the $N_s$ effective excitation ports is denoted by $\mathbf{H}_{\mathrm{BS\textnormal{-}sur}}\in\mathbb{C}^{N_s\times M}$. For tractability, and following a common assumption in analytical studies, this segment is modeled by Rayleigh fading with large-scale gain $\beta_\mathrm{BS}$, i.e., 
\begin{equation}
\mathbf{H}_{\mathrm{BS\textnormal{-}sur}}= \sqrt{\beta_\mathrm{BS}}\;\mathbf{G}_{\mathrm{BS\textnormal{-}sur}},
\end{equation}
in which $\mathbf{G}_{\mathrm{BS\textnormal{-}sur}}$ contains independent and identically distributed (i.i.d.) $\mathcal{CN}(0,1)$ entries.

\begin{figure}[]
\centering
\includegraphics[width=.95\columnwidth]{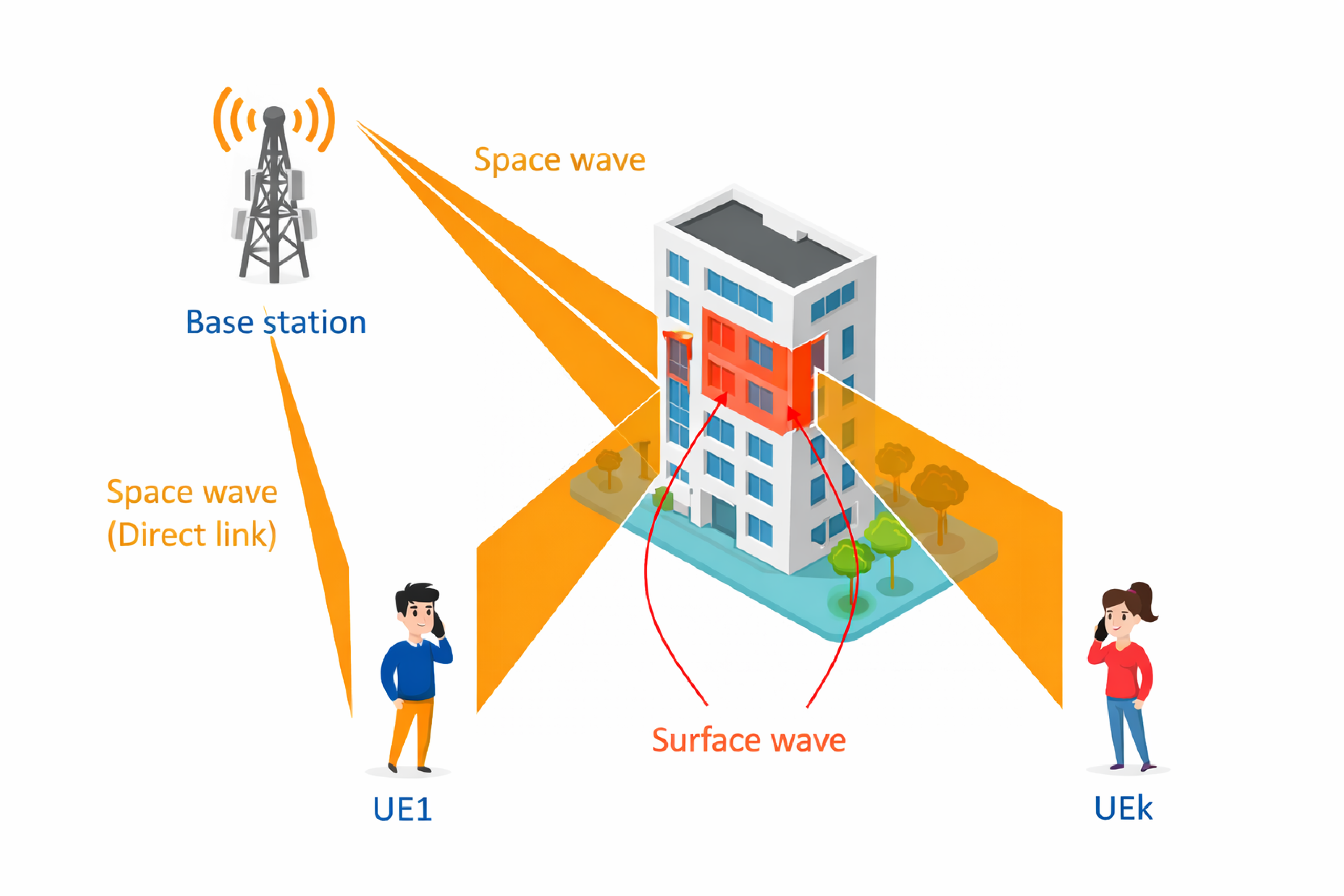}
\caption{A downlink E-FAS-aided environment model with an $M$-antenna BS communicating to $K$ single-antenna UEs.}\label{fig:efas_model}
\end{figure}

\subsection{Surface Wave Propagation}
Once excited, the EM  field propagates along the metasurface as a guided surface wave. For a propagation distance $d$ along the engineered path, the surface wave envelope is given by
\begin{equation}
H_{\mathrm{sw}}(d)= A_0 e^{-\alpha d} e^{-j\beta d},
\end{equation}
where $A_0$ incorporates both coupling and the excitation efficiency related to local impedance matching, in which $\alpha=\Re\{\gamma\}$ and $\beta=\Im\{\gamma\}$ represent the attenuation and phase constants, respectively. The surface wave propagation constant $\gamma$ is governed by the surface impedance $Z_{\mathrm{sur}}$ through
\begin{equation}
\gamma= \sqrt{-\omega^2\mu_0\varepsilon_0- (-j\omega\varepsilon_0 Z_{\mathrm{sur}})^2},
\end{equation}
with $\omega$ being the operating angular frequency and $(\mu_0,\varepsilon_0)$ the free-space permeability and permittivity \cite{wong2025efas}. 

Therefore, the full surface wave channel is formulated as
\begin{equation}
\mathbf{H}_{\mathrm{sur}}= H_{\mathrm{sw}}(d)\,\mathbf{G}_{\mathrm{path}},
\end{equation}
where $\mathbf{G}_{\mathrm{path}}\in\mathbb{C}^{N_s\times N_s}$ is a deterministic routing matrix that specifies the activated tiles and junctions forming the guided pathway. Hence, $\mathbf{H}_{\mathrm{sur}}$ is treated as deterministic for a fixed geometric configuration \cite{wong2025efas}, reflecting the guided and engineered nature of surface wave propagation in E-FAS.\footnote{In line with the E-FAS framework, surface wave propagation is modeled as deterministic for a fixed metasurface configuration, reflecting its guided, impedance-controlled nature and the absence of rich multipath scattering. Random fading is therefore confined to the space wave segments before surface excitation and after re-radiation \cite{wong2025efas}.}

\subsection{Launcher Processing and Noise}
At the termination of the guided path (that should be made close to the intended UEs), programmable launchers reconvert the surface wave into a space wave directed toward the UEs. Their operation is represented by a relay processing matrix $\mathbf{W}_{\mathrm{relay}}\in\mathbb{C}^{N_L\times N_s}$ with $N_L$ launcher ports as 
\begin{equation}
\mathbf{W}_{\mathrm{relay}}= \alpha_r \mathbf{U},
\end{equation}
where $\alpha_r$ denotes a complex amplification factor and $\mathbf{U}$ is a unitary or selection matrix. The relay electronics introduce thermal noise $\mathbf{n}_{\mathrm{relay}}\sim \mathcal{CN}(\mathbf{0},\sigma_r^2\mathbf{I}_{N_s})$, which
\begin{equation}
z_{u,\mathrm{relay}}= \mathbf{h}_{\mathrm{relay\textnormal{-}UE},u}\mathbf{W}_{\mathrm{relay}}\mathbf{n}_{\mathrm{relay}},
\end{equation}
to the received signal at the $u$-th UE.

\subsection{Launcher-to-UE and Direct BS-to-UE Channels}
The final hop from the launchers to UE~$u$ is characterized by
\begin{equation}
\mathbf{h}_{\mathrm{relay\textnormal{-}UE},u}= \sqrt{\beta_{{\rm LU},u}}\,\mathbf{g}_{\mathrm{relay\textnormal{-}UE},u},
\end{equation}
where $\mathbf{g}_{\mathrm{relay\textnormal{-}UE},u}$ contains i.i.d.~$\mathcal{CN}(0,1)$ entries, and $\beta_{{\rm LU},u}$ denotes the associated large-scale gain.

In addition, the BS may have a residual direct path to UE~$u$ through penetration or partial blockage. Hence, this direct link is defined as 
\begin{equation}
\mathbf{h}_{u}^{\mathrm{dl}}= \sqrt{\beta_{{\rm DL},u}}\,\mathbf{g}_{u}^{\mathrm{dl}}, \label{eq:direct}
\end{equation}
with $\mathbf{g}_{u}^{\mathrm{dl}}\in\mathbb{C}^{1\times M}$ having i.i.d.~$\mathcal{CN}(0,1)$ entries and $\beta_{{\rm DL},u}$ representing the large-scale attenuation. A fully blocked scenario corresponds to setting $\beta_{{\rm DL},u}=0$.

\subsection{End-to-End Received Signal Model}
Upon combining all the components, the signal received by UE~$u$ becomes
\begin{multline}\label{eq:yu_vector}
y_u=\mathbf{h}_{\mathrm{relay\textnormal{-}UE},u}\mathbf{W}_{\mathrm{relay}}\mathbf{H}_{\mathrm{sur}}\mathbf{H}_{\mathrm{BS\textnormal{-}sur}}\mathbf{x} +\\
\mathbf{h}_{u}^{\mathrm{dl}} \mathbf{x}+z_{u,\mathrm{relay}}+n_u,
\end{multline}
where $n_u\sim\mathcal{CN}(0,\sigma^2)$ denotes the receiver noise. Substituting $\mathbf{x}=\sum_{i=1}^{K}\sqrt{P_{i}}\mathbf{w}_{i}x_{i}$ into \eqref{eq:yu_vector} gives (\ref{eq:yu_interference}) (see top of this page). 
\begin{figure*}
\begin{multline}\label{eq:yu_interference}
y_u=\sqrt{P_u}\left(
		\mathbf{h}_{\mathrm{relay\textnormal{-}UE},u}
		\mathbf{W}_{\mathrm{relay}}
		\mathbf{H}_{\mathrm{sur}}
		\mathbf{H}_{\mathrm{BS\textnormal{-}sur}}
		\mathbf{w}_u
		+\mathbf{h}_{u}^{\mathrm{dl}}\mathbf{w}_u
		\right)x_u\\
+\sum_{i\neq u}\sqrt{P_i}
\left(
		\mathbf{h}_{\mathrm{relay\textnormal{-}UE},u}
		\mathbf{W}_{\mathrm{relay}}
		\mathbf{H}_{\mathrm{sur}}
		\mathbf{H}_{\mathrm{BS\textnormal{-}sur}}
		\mathbf{w}_i
+\mathbf{h}_{u}^{\mathrm{dl}}\mathbf{w}_i\right)x_i+z_{u,\mathrm{relay}}+ n_u
\end{multline}
\hrulefill
\end{figure*}
The summation over $i\neq u$ in \eqref{eq:yu_interference} represents the inter-user interference created by the signals intended for the remaining UEs. Upon defining the effective end-to-end channel coefficient associated with stream $x_i$ as
\begin{equation}
h_{u,i}^{\mathrm{eq}}\triangleq\mathbf{h}_{\mathrm{relay\textnormal{-}UE},u}\mathbf{W}_{\mathrm{relay}}\mathbf{H}_{\mathrm{sur}}\mathbf{H}_{\mathrm{BS\textnormal{-}sur}}\mathbf{w}_i+\mathbf{h}_{u}^{\mathrm{dl}}\mathbf{w}_i ,
\end{equation}
the received signal finally becomes
\begin{equation}
y_u=\sqrt{P_u}\,h_{u,u}^{\mathrm{eq}} x_u+\sum_{i\neq u}\sqrt{P_i}\,h_{u,i}^{\mathrm{eq}} x_i+z_{u,\mathrm{relay}} + n_u,
\end{equation}
which is the complete multiuser E-FAS downlink model.

\section{End-to-End Channel Statistics}\label{sec:eq_channel}
The performance analysis in the following sections relies on the statistical characterization of the composite end-to-end  channel experienced by each UE. In particular, outage probability and ergodic capacity expressions depend on the distribution of the effective scalar channel coefficient $h_{u,i}^{\mathrm{eq}}$, which incorporates both the surface wave-assisted path and the direct BS-UE link. As $\mathbf{H}_{\mathrm{sur}}$ is deterministic for a given E-FAS configuration and routing geometry, the randomness in $h_{u,i}^{\mathrm{eq}}$ arises entirely from small-scale fading on the BS-surface and launcher-UE segments as well as the distribution of the BS precoder. This section establishes the distribution of the equivalent channel under the assumptions that allow tractable performance analysis. We begin with the single-user case and later extend the insights to the multiuser case.

\subsection{Scalar Effective Channel}
For the user of interest, say UE $u$, the relevant effective channel is $h_{u}^{\mathrm{eq}} \triangleq h_{u,u}^{\mathrm{eq}}$, expressed as
\begin{equation}\label{eq:hu_eq_def}
h_{u}^{\mathrm{eq}}=\underbrace{
		\mathbf{h}_{\mathrm{relay\textnormal{-}UE},u}
		\mathbf{W}_{\mathrm{relay}}
		\mathbf{H}_{\mathrm{sur}}
		\mathbf{H}_{\mathrm{BS\textnormal{-}sur}}
		\mathbf{w}_u}_{h_{u}^{\mathrm{sw}}}
+\underbrace{\mathbf{h}_{u}^{\mathrm{dl}}\mathbf{w}_u}_{h_{u}^{\mathrm{dl,eff}}}.
\end{equation}
The first term represents propagation along the surface wave-guided E-FAS path, while the second term captures the effective projection of the direct BS-UE channel onto the liner precoding vector $\mathbf{w}_u$. 

To derive closed-form distributions, we proceed with the single-user case ($K=1$), where the structure is simplest and the essential stochastic behavior is most transparent.

\subsection{Single-User Case and Precoder Model}
In a single-user scenario, the optimal BS strategy under perfect channel knowledge would be maximum ratio transmission (MRT) with respect to the full composite channel. However, since the composite channel depends on both the surface wave routing and the intermediate propagation segments that may not be directly observable at the BS, a fully informed MRT having perfect end-to-end channel information may not be available. To capture this uncertainty while maintaining analytical tractability, we adopt the following standard assumption. In the single-user setting, we consider that the BS precoding $\mathbf{w}$ is modelled as a unit-norm isotropically distributed random vector independent of $\mathbf{H}_{\mathrm{BS\textnormal{-}sur}}$ and $\mathbf{h}_{\mathrm{relay\textnormal{-}UE}}$, i.e., 
\begin{align}
\mathbf{w} \sim \mathrm{Unif}(\mathbb{S}^{M-1}),
\end{align}
in which $\|\mathbf{w}\|=1$.

This abstraction reflects scenarios where the BS does not exploit CSI, or relies on long-term channel statistics or wide beams. Under this assumption, the effective channel becomes a projection of a Gaussian row vector onto a random unit-norm direction, enabling a clean analytical characterization.

\subsection{Distribution of the Surface Wave-Assisted Component}
Let the row vector $\mathbf{T}_u$ which is defined as 
\begin{equation}\label{eq:Tu_def}
\mathbf{T}_u\triangleq\mathbf{h}_{\mathrm{relay\textnormal{-}UE},u}
	\mathbf{W}_{\mathrm{relay}}
	\mathbf{H}_{\mathrm{sur}}
	\mathbf{H}_{\mathrm{BS\textnormal{-}sur}}
	\in \mathbb{C}^{1\times M},
\end{equation}
so that $h_{u}^{\mathrm{sw}}=\mathbf{T}_u \mathbf{w}$. As a consequence, we characterize its distribution as the following lemma. 

\begin{lemma}\label{lem:Tu_gauss}
Define $\mathbf{H}_{\mathrm{BS\textnormal{-}sur}}=\sqrt{\beta_\mathrm{BS}}\,\mathbf{G}_{\mathrm{BS\textnormal{-}sur}}$ with independent $\mathcal{CN}(0,1)$ entries, and let $\mathbf{h}_{\mathrm{relay\textnormal{-}UE},u}=\sqrt{\beta_{{\rm LU},u}}\,\mathbf{g}_{\mathrm{relay\textnormal{-}UE},u}$ with independent $\mathcal{CN}(0,1)$ entries. Assume further that $\mathbf{H}_{\mathrm{sur}}$ and $\mathbf{W}_{\mathrm{relay}}$ are deterministic. Then $\mathbf{T}_u$ is a zero-mean complex Gaussian vector with covariance
\begin{equation}
\mathbb{E}\{\mathbf{T}_u^H \mathbf{T}_u\}= \Omega_{\mathrm{sw},u}\, \mathbf{I}_M,
\end{equation}
where
\begin{equation}\label{eq:Omega_sw}
\Omega_{\mathrm{sw},u}=
		\frac{\beta_{{\rm BS}}\beta_{{\rm LU},u}}{M}
		\operatorname{tr}
		\!\left(
		\mathbf{H}_{\mathrm{sur}}^{H}
		\mathbf{W}_{\mathrm{relay}}^{H}
		\mathbf{W}_{\mathrm{relay}}
		\mathbf{H}_{\mathrm{sur}}
		\right).
\end{equation}
\end{lemma}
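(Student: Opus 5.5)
The plan is to condition on the last-hop fading. Write $\mathbf{A}\triangleq\mathbf{W}_{\mathrm{relay}}\mathbf{H}_{\mathrm{sur}}\in\mathbb{C}^{N_L\times N_s}$, which is deterministic by assumption. Then $\mathbf{T}_u=\sqrt{\beta_{\rm BS}\beta_{{\rm LU},u}}\,\mathbf{g}_{\mathrm{relay\textnormal{-}UE},u}\mathbf{A}\,\mathbf{G}_{\mathrm{BS\textnormal{-}sur}}$. Fix $\mathbf{g}_{\mathrm{relay\textnormal{-}UE},u}$ and set $\mathbf{a}\triangleq\mathbf{g}_{\mathrm{relay\textnormal{-}UE},u}\mathbf{A}\in\mathbb{C}^{1\times N_s}$. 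Then $\mathbf{T}_u$ is a fixed linear map of the i.i.d.\ $\mathcal{CN}(0,1)$ entries of $\mathbf{G}_{\mathrm{BS\textnormal{-}sur}}$. Its $m$-th entry is $\sum_{n}a_n[\mathbf{G}_{\mathrm{BS\textnormal{-}sur}}]_{nm}$, and different columns use disjoint, independent entries. So, conditionally, $\mathbf{T}_u\sim\mathcal{CN}(\mathbf{0},\beta_{\rm BS}\beta_{{\rm LU},u}\|\mathbf{a}\|^2\mathbf{I}_M)$. This step is exact, and it also gives circular symmetry, since multiplying $\mathbf{G}_{\mathrm{BS\textnormal{-}sur}}$ by $e^{j\theta}$ leaves its law unchanged.

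Next I will remove the conditioning to get the second-order statistics. Zero mean follows from $\mathbb{E}\{\mathbf{G}_{\mathrm{BS\textnormal{-}sur}}\}=\mathbf{0}$ and independence of the two fading matrices. For the covariance, the tower property together with the identity $\mathbb{E}\{\mathbf{G}^H\mathbf{B}\mathbf{G}\}=\operatorname{tr}(\mathbf{B})\mathbf{I}_M$ (valid for i.i.d.\ unit-variance $\mathbf{G}$ and any fixed $\mathbf{B}$) gives $\mathbb{E}\{\mathbf{T}_u^H\mathbf{T}_u\}=\beta_{\rm BS}\beta_{{\rm LU},u}\,\mathbb{E}\{\|\mathbf{a}\|^2\}\mathbf{I}_M$. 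Moreover, $\mathbb{E}\{\|\mathbf{a}\|^2\}=\operatorname{tr}(\mathbf{A}^H\mathbb{E}\{\mathbf{g}^H\mathbf{g}\}\mathbf{A})=\operatorname{tr}(\mathbf{H}_{\mathrm{sur}}^H\mathbf{W}_{\mathrm{relay}}^H\mathbf{W}_{\mathrm{relay}}\mathbf{H}_{\mathrm{sur}})$. This reproduces the structure of \eqref{eq:Omega_sw}, namely a scaled identity proportional to that trace.

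The factor $1/M$ in \eqref{eq:Omega_sw} does not come out of the raw computation above. I expect it to reflect a power normalization of the BS-side segment, i.e., taking $\mathbf{G}_{\mathrm{BS\textnormal{-}sur}}$ with variance-$1/M$ entries so that the total excitation power is independent of $M$. I will make this convention explicit and carry the computation under it; under that convention the $1/M$ appears directly.

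The main obstacle is the word ``Gaussian'' for the unconditional law. $\mathbf{T}_u$ is a product of two independent Gaussian factors, so it is exactly Gaussian only conditionally on $\mathbf{g}_{\mathrm{relay\textnormal{-}UE},u}$. Unconditionally it is a Gaussian scale mixture with random power $\|\mathbf{a}\|^2$. I would first justify the Gaussian claim by noting that $\|\mathbf{a}\|^2/\mathbb{E}\{\|\mathbf{a}\|^2\}\to1$ when the effective rank of $\mathbf{A}$ (roughly $N_L$) is large, by a law-of-large-numbers argument on the quadratic form $\mathbf{g}\mathbf{A}\mathbf{A}^H\mathbf{g}^H$. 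The conditional $\mathcal{CN}$ law then concentrates on $\mathcal{CN}(\mathbf{0},\Omega_{\mathrm{sw},u}\mathbf{I}_M)$. Failing that, I would state the result as second-order (moment-matched) Gaussian, with the exact mean and covariance computed above.
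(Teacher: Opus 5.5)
Your route is the same as the paper's proof: condition on $\mathbf{g}_{\mathrm{relay\textnormal{-}UE},u}$, treat $\mathbf{T}_u$ as a linear map of $\mathbf{G}_{\mathrm{BS\textnormal{-}sur}}$, apply $\mathbb{E}\{\mathbf{G}^H\mathbf{B}\mathbf{G}\}=\operatorname{tr}(\mathbf{B})\mathbf{I}$, then average over the last hop. Your computation is correct, and both problems you flag are real defects of the statement. The paper's proof does not resolve them. It passes from conditional Gaussianity straight to ``$\mathbf{T}_u$ is a zero-mean complex Gaussian vector,'' and it asserts that the averaging ``yields'' the formula for $\Omega_{\mathrm{sw},u}$ without ever producing the $1/M$.

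What needs changing is how you handle these defects. Do not obtain the $1/M$ by giving $\mathbf{G}_{\mathrm{BS\textnormal{-}sur}}$ variance-$1/M$ entries. The lemma explicitly assumes $\mathcal{CN}(0,1)$ entries, so that convention proves a different lemma. Under the hypotheses as written, your calculation gives $\mathbb{E}\{\mathbf{T}_u^H\mathbf{T}_u\}=\beta_{\rm BS}\beta_{{\rm LU},u}\operatorname{tr}(\mathbf{H}_{\mathrm{sur}}^H\mathbf{W}_{\mathrm{relay}}^H\mathbf{W}_{\mathrm{relay}}\mathbf{H}_{\mathrm{sur}})\,\mathbf{I}_M$. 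State plainly that the $1/M$ is an error in the statement; this same constant, without $1/M$, is also the variance of $\mathbf{T}_u\mathbf{w}$ for unit-norm $\mathbf{w}$.

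Likewise, make the non-Gaussianity explicit. You have $\mathbf{T}_u=\sqrt{\beta_{\rm BS}\beta_{{\rm LU},u}}\,\|\mathbf{a}\|\,\mathbf{z}$ with $\mathbf{z}\sim\mathcal{CN}(\mathbf{0},\mathbf{I}_M)$ independent of $\mathbf{g}_{\mathrm{relay\textnormal{-}UE},u}$. For $N_L=1$, each entry is a product of two independent complex Gaussians, so the unconditional Gaussian claim is false in general.

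Your concentration argument also needs a hypothesis that is absent from the lemma: $\operatorname{tr}\big((\mathbf{A}\mathbf{A}^H)^2\big)/\big(\operatorname{tr}(\mathbf{A}\mathbf{A}^H)\big)^2\to 0$. This ratio is exactly $\mathrm{Var}(\|\mathbf{a}\|^2)/(\mathbb{E}\{\|\mathbf{a}\|^2\})^2$. Under that condition, Slutsky gives convergence in distribution of $\mathbf{T}_u/\sqrt{\beta_{\rm BS}\beta_{{\rm LU},u}\mathbb{E}\{\|\mathbf{a}\|^2\}}$ to $\mathcal{CN}(\mathbf{0},\mathbf{I}_M)$.

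So what you can honestly prove is:
\begin{itemize}
\item exact conditional Gaussianity;
\item exact zero mean and scaled-identity covariance (with the corrected constant);
\item unconditional Gaussianity only asymptotically, under the rank condition above.
\end{itemize}
That asymptotic condition is also what the paper's later exponential law for $|h_u^{\mathrm{eq}}|^2$ implicitly relies on.
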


\begin{IEEEproof}
By construction, $\mathbf{T}_u$ can be written as a linear transformation of the Gaussian matrix $\mathbf{G}_{\mathrm{BS\textnormal{-}sur}}$, conditioned on $\mathbf{g}_{\mathrm{relay\textnormal{-}UE},u}$. Since linear transformations of Gaussian random matrices remain Gaussian, $\mathbf{T}_u$ is a zero-mean complex Gaussian vector. Its covariance is followed by applying the identity $\mathbb{E}\{\mathbf{G}^H \mathbf{B}\mathbf{G}\}=\operatorname{tr}(\mathbf{B})\,\mathbf{I}$ for Gaussian matrices with i.i.d.~entries and then averaging over $\mathbf{g}_{\mathrm{relay\textnormal{-}UE},u}$, which yields \eqref{eq:Omega_sw} and therefore completes the proof.
\end{IEEEproof}


Additionally, since $\mathbf{w}$ is isotropic and independent of $\mathbf{T}_u$, the scalar projection $h_{u}^{\mathrm{sw}}=\mathbf{T}_u\mathbf{w}$ is thus distributed as
\begin{equation}
h_{u}^{\mathrm{sw}}\sim \mathcal{CN}(0,\Omega_{\mathrm{sw},u}).
\end{equation}

\subsection{Direct Path Contribution}
From \eqref{eq:direct}, the effective direct link is given by
\begin{equation}
h_{u}^{\mathrm{dl,eff}}=\mathbf{h}_{u}^{\mathrm{dl}} \mathbf{w}=\sqrt{\beta_{{\rm DL},u}}\mathbf{g}_{u}^{\mathrm{dl}} \mathbf{w}.
\end{equation}
Because $\mathbf{g}_u^{\mathrm{dl}}$ has independent $\mathcal{CN}(0,1)$ components and $\mathbf{w}$ is a unit-norm vector independent of it, the projection $\mathbf{g}_{u}^{\mathrm{dl}} \mathbf{w}$ remains $\mathcal{CN}(0,1)$; this follows from the rotational invariance of complex Gaussian vectors. Hence, we have
\begin{equation}
h_{u}^{\mathrm{dl,eff}}\sim \mathcal{CN}(0,\beta_{{\rm DL},u}).
\end{equation}

\subsection{Distribution of the Equivalent Channel}
Since $h_{u}^{\mathrm{sw}}$ and $h_{u}^{\mathrm{dl,eff}}$ are independent zero-mean complex Gaussian random variables, their sum in \eqref{eq:hu_eq_def} is also Gaussian. This leads to the following fundamental result.

\begin{theorem}[Equivalent Channel Distribution]
The effective single-user  channel $h_{u}^{\mathrm{eq}}= h_{u}^{\mathrm{sw}} + h_{u}^{\mathrm{dl,eff}}$ is distributed as
\begin{equation}
h_{u}^{\mathrm{eq}}\sim\mathcal{CN}(0,\Omega_{\mathrm{eq},u}),
\end{equation}
where we have
\begin{equation}
\Omega_{\mathrm{eq},u}=	\Omega_{\mathrm{sw},u}+\beta_{{\rm DL},u},
\end{equation}
and accordingly, the channel's power gain $|h_{u}^{\mathrm{eq}}|^{2}$ is exponential with mean $\Omega_{\mathrm{eq},u}$.
\end{theorem}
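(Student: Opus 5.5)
The plan is to condition on the precoder $\mathbf{w}$, show that the two components are conditionally independent zero-mean circularly symmetric Gaussians whose variances do not depend on $\mathbf{w}$, and then remove the conditioning. The conclusion about $|h_u^{\mathrm{eq}}|^2$ then follows from the standard fact that the squared modulus of a $\mathcal{CN}(0,\Omega)$ variable is exponential with mean $\Omega$.

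\emph{Step 1 (surface-wave term given $\mathbf{w}$).} Fix $\mathbf{w}$ with $\|\mathbf{w}\|=1$. By Lemma~\ref{lem:Tu_gauss}, $\mathbf{T}_u$ is zero-mean complex Gaussian with covariance $\Omega_{\mathrm{sw},u}\mathbf{I}_M$. Its linear functional $\mathbf{T}_u\mathbf{w}$ is therefore $\mathcal{CN}(0,\Omega_{\mathrm{sw},u}\|\mathbf{w}\|^2)=\mathcal{CN}(0,\Omega_{\mathrm{sw},u})$. Circular symmetry is inherited because $\mathbf{T}_u$ is built from the circularly symmetric $\mathbf{G}_{\mathrm{BS\textnormal{-}sur}}$: multiplying it by $e^{j\theta}$ leaves the law unchanged.

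\emph{Step 2 (direct term and independence given $\mathbf{w}$).} Still with $\mathbf{w}$ fixed, $h_u^{\mathrm{dl,eff}}=\sqrt{\beta_{\mathrm{DL},u}}\,\mathbf{g}_u^{\mathrm{dl}}\mathbf{w}\sim\mathcal{CN}(0,\beta_{\mathrm{DL},u})$, as already shown in the direct-path subsection. This term depends only on $\mathbf{g}_u^{\mathrm{dl}}$. The surface-wave term $\mathbf{T}_u\mathbf{w}$ depends only on $(\mathbf{G}_{\mathrm{BS\textnormal{-}sur}},\mathbf{g}_{\mathrm{relay\textnormal{-}UE},u})$, and these fading blocks are mutually independent and independent of $\mathbf{w}$. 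Hence, conditionally on $\mathbf{w}$, the two terms are independent.

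\emph{Step 3 (summing and removing the conditioning).} Multiplying the two conditional characteristic functions gives $\mathbb{E}\{e^{j\Re(s^* h_u^{\mathrm{eq}})}\mid\mathbf{w}\}=\exp\!\big(-|s|^2(\Omega_{\mathrm{sw},u}+\beta_{\mathrm{DL},u})/4\big)$. This expression does not depend on $\mathbf{w}$, so averaging over $\mathrm{Unif}(\mathbb{S}^{M-1})$ leaves it unchanged. Therefore $h_u^{\mathrm{eq}}\sim\mathcal{CN}(0,\Omega_{\mathrm{eq},u})$ with $\Omega_{\mathrm{eq},u}=\Omega_{\mathrm{sw},u}+\beta_{\mathrm{DL},u}$. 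Writing $h_u^{\mathrm{eq}}=\sqrt{\Omega_{\mathrm{eq},u}/2}\,(X+jY)$ with $X,Y$ i.i.d.\ $\mathcal{N}(0,1)$, we get $|h_u^{\mathrm{eq}}|^2=\tfrac{\Omega_{\mathrm{eq},u}}{2}\chi^2_2$, which is exponential with mean $\Omega_{\mathrm{eq},u}$.

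\emph{Main obstacle: Gaussianity in Step 1.} Strictly, $\mathbf{T}_u$ is Gaussian only conditionally on $\mathbf{g}_{\mathrm{relay\textnormal{-}UE},u}$. Given $\mathbf{g}_{\mathrm{relay\textnormal{-}UE},u}$ and $\mathbf{w}$, $\mathbf{G}_{\mathrm{BS\textnormal{-}sur}}\mathbf{w}\sim\mathcal{CN}(\mathbf{0},\mathbf{I}_{N_s})$, so
\begin{equation*}
\mathbf{T}_u\mathbf{w}\sim\mathcal{CN}\!\left(0,\ \beta_{\mathrm{BS}}\beta_{\mathrm{LU},u}\,\|\mathbf{g}_{\mathrm{relay\textnormal{-}UE},u}\mathbf{W}_{\mathrm{relay}}\mathbf{H}_{\mathrm{sur}}\|^2\right)
\end{equation*}
conditionally, and unconditionally it is a Gaussian scale mixture. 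I will rely on Lemma~\ref{lem:Tu_gauss} as stated, which makes Step 1 immediate. I will also remark that the approximation is justified by channel hardening: the random conditional variance concentrates around its mean $\Omega_{\mathrm{sw},u}$ (via the trace identity used in the lemma) when the launcher aperture is large, since it is a quadratic form in many i.i.d.\ entries. The conditional-independence argument of Step 2 is unaffected by this point.
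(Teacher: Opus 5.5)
\textbf{Verdict: genuine gap, shared with the paper.} Your Steps 1--3 follow the paper's proof: two independent zero-mean circularly symmetric Gaussian terms whose variances add. Your conditioning on $\mathbf{w}$ is more careful than the paper, which asserts independence of $h_u^{\mathrm{sw}}$ and $h_u^{\mathrm{dl,eff}}$ even though both depend on the same precoder. But the obstacle you flag at the end is not closed by citing Lemma~\ref{lem:Tu_gauss}. The Gaussianity claim of that lemma is exactly the step that fails: its proof passes from ``Gaussian conditionally on $\mathbf{g}_{\mathrm{relay\textnormal{-}UE},u}$'' to ``Gaussian''. Your argument therefore inherits the paper's gap, and the statement is false as an exact claim.

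To see this in your notation, let $\mathbf{Q}=\mathbf{W}_{\mathrm{relay}}\mathbf{H}_{\mathrm{sur}}\mathbf{H}_{\mathrm{sur}}^{H}\mathbf{W}_{\mathrm{relay}}^{H}$ and $V=\beta_{\mathrm{BS}}\beta_{\mathrm{LU},u}\,\mathbf{g}_{\mathrm{relay\textnormal{-}UE},u}\mathbf{Q}\,\mathbf{g}_{\mathrm{relay\textnormal{-}UE},u}^{H}$. Then $h_u^{\mathrm{sw}}\mid V\sim\mathcal{CN}(0,V)$, so $\mathbb{E}\{|h_u^{\mathrm{sw}}|^4\}=2\mathbb{E}\{V^2\}$, and $\mathrm{Var}(V)=(\beta_{\mathrm{BS}}\beta_{\mathrm{LU},u})^2\operatorname{tr}(\mathbf{Q}^2)$. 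Expand $|h_u^{\mathrm{sw}}+h_u^{\mathrm{dl,eff}}|^4$ using your independence argument (which survives, since the conditional laws given $\mathbf{w}$ remain free of $\mathbf{w}$) and circularity. This gives
\begin{equation*}
\mathbb{E}\{|h_u^{\mathrm{eq}}|^4\}-2\big(\mathbb{E}\{|h_u^{\mathrm{eq}}|^2\}\big)^2=2(\beta_{\mathrm{BS}}\beta_{\mathrm{LU},u})^2\operatorname{tr}(\mathbf{Q}^2)>0
\end{equation*}
whenever $\Omega_{\mathrm{sw},u}>0$. An exponential variable with mean $\mu$ has second moment exactly $2\mu^2$, so $|h_u^{\mathrm{eq}}|^2$ is never exactly exponential. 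For rank-one $\mathbf{Q}$ (e.g.\ $N_L=1$ or a rank-one $\mathbf{G}_{\mathrm{path}}$), $h_u^{\mathrm{sw}}$ is the double-Rayleigh cascaded channel.

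\textbf{What you can actually prove.} Exactly, you get only the conditional version: given $\mathbf{g}_{\mathrm{relay\textnormal{-}UE},u}$, $h_u^{\mathrm{eq}}\sim\mathcal{CN}(0,V+\beta_{\mathrm{DL},u})$, so unconditionally $|h_u^{\mathrm{eq}}|^2$ is a mixture of exponentials. The theorem is then an approximation, and your hardening remark is the right justification, but it needs the correct condition. Since $\mathrm{Var}(V)/(\mathbb{E}\{V\})^2=\operatorname{tr}(\mathbf{Q}^2)/(\operatorname{tr}\mathbf{Q})^2$, you need a large effective rank of $\mathbf{W}_{\mathrm{relay}}\mathbf{H}_{\mathrm{sur}}$, not merely many launcher ports. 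A rank-one routing matrix gives no hardening for any $N_L$.

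\textbf{The factor $M$.} You wrote that $V$ concentrates around $\Omega_{\mathrm{sw},u}$, but in fact $\mathbb{E}\{V\}=\beta_{\mathrm{BS}}\beta_{\mathrm{LU},u}\operatorname{tr}(\mathbf{H}_{\mathrm{sur}}^{H}\mathbf{W}_{\mathrm{relay}}^{H}\mathbf{W}_{\mathrm{relay}}\mathbf{H}_{\mathrm{sur}})=M\,\Omega_{\mathrm{sw},u}$. The identity $\mathbb{E}\{\mathbf{G}^H\mathbf{B}\mathbf{G}\}=\operatorname{tr}(\mathbf{B})\mathbf{I}$ produces no factor $1/M$, so the $1/M$ in \eqref{eq:Omega_sw} is spurious. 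Your Step 1, which imports $\Omega_{\mathrm{sw},u}$ from the lemma, and your own conditional computation therefore disagree by a factor $M$. Even the second moment $\mathbb{E}\{|h_u^{\mathrm{eq}}|^2\}$ equals $\Omega_{\mathrm{sw},u}+\beta_{\mathrm{DL},u}$ only if $\Omega_{\mathrm{sw},u}$ is redefined without the $1/M$, or the entries of $\mathbf{G}_{\mathrm{BS\textnormal{-}sur}}$ are given variance $1/M$.
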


\begin{IEEEproof}
The surface wave-assisted component $h_u^{\mathrm{sw}}$ and the direct component $h_u^{\mathrm{dl,eff}}$ are independent zero-mean complex Gaussian random variables with variances $\Omega_{\mathrm{sw},u}$ and $\beta_{{\rm DL},u}$, respectively. Since the sum of independent complex Gaussian random variables remains complex Gaussian with variance equal to the sum of the individual variances, $h_u^{\mathrm{eq}}$ follows $\mathcal{CN}(0,\Omega_{\mathrm{eq},u})$, where $\Omega_{\mathrm{eq},u}=\Omega_{\mathrm{sw},u}+\beta_{{\rm DL},u}$.
\end{IEEEproof}


\section{Single-User Performance} \label{sec:single_user}
We here consider the single-user case, where the received signal depends solely on the composite scalar channel $h \equiv h_{u}^{\mathrm{eq}}$. For notational simplicity, the user index is omitted. The channel $h$ is circularly symmetric complex Gaussian with variance $\Omega_{\mathrm{eq}}$, incorporating both the surface wave-assisted gain and any residual direct propagation. This section establishes closed-form expressions for the outage probability and ergodic capacity under this composite fading model.

\subsection{Effective SNR}
With a single intended stream and transmit power $P$, the received signal reduces to
\begin{equation}
y = \sqrt{P}\, h x + z_{\mathrm{relay}} + n ,
\end{equation}
where $x$ is a unit-power symbol, $n \sim \mathcal{CN}(0,\sigma^{2})$ is the receiver noise, and $z_{\mathrm{relay}}$ is the noise introduced by the launcher/relay subsystem. Conditioned on the launcher-to-UE channel, the relay-induced noise power equals
\begin{equation}
\mathbb{E}\!\left\{|z_{\mathrm{relay}}|^{2} \mid \mathbf{h}_{\mathrm{relay\textnormal{-}UE}}\right\}
=\sigma_{r}^{2} \big\| \mathbf{h}_{\mathrm{relay\textnormal{-}UE}}\mathbf{W}_{\mathrm{relay}} \big\|^{2}.
\end{equation}
Averaging over the Rayleigh fading distribution of $\mathbf{h}_{\mathrm{relay\textnormal{-}UE}}$, and using standard trace identities for Gaussian vectors, we define the effective noise variance as
\begin{equation}
\sigma_{\mathrm{eff}}^{2}=\sigma^{2}+
\sigma_{r}^{2}\beta_{{\rm LU}}
		\operatorname{tr}\!\left(
		\mathbf{W}_{\mathrm{relay}}
		\mathbf{W}_{\mathrm{relay}}^{H}
		\right),
\end{equation}
where $\beta_{\rm LU}$ is the large-scale gain of the launcher-to-UE hop. The resultant instantaneous SNR is therefore
\begin{equation}
\gamma= \frac{P|h|^{2}}{\sigma_{\mathrm{eff}}^{2}}= \rho\, |h|^{2},
\end{equation}
with the effective SNR given by
\begin{equation}
\rho = \frac{P}{\sigma_{\mathrm{eff}}^{2}} .
\end{equation}

\subsection{Outage Probability}
Let $R_{0}$ represent the target spectral efficiency in bits per second per Hertz (bps/Hz). An outage occurs whenever the instantaneous mutual information  $\log_{2}(1+\gamma)$ falls below $R_{0}$. Because $h$ is zero-mean complex Gaussian with variance $\Omega_{\mathrm{eq}}$, the magnitude square $|h|^{2}$ is exponential with mean $\Omega_{\mathrm{eq}}$, which directly enables a closed-form expression.

\begin{theorem}[Outage Probability]
Let $h \sim \mathcal{CN}(0,\Omega_{\mathrm{eq}})$ and $\gamma=\rho |h|^{2}$. For any target rate $R_{0}>0$, the outage probability is formulated as
\begin{equation}\label{eq:Pout_final}
P_{\mathrm{out}}(R_{0})= 1 -
			\exp\!\left(
			-\frac{\gamma_{0}}{\rho \Omega_{\mathrm{eq}}}
			\right),
\end{equation}
where $\gamma_{0} = 2^{R_{0}} - 1$.
\end{theorem}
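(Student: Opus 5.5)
The plan is to reduce the outage event to a tail event of the channel power gain, and then use the exponential law from the Equivalent Channel Distribution theorem. Because $t\mapsto\log_2(1+t)$ is strictly increasing on $[0,\infty)$, the event $\{\log_2(1+\gamma)<R_0\}$ is the same as $\{\gamma<2^{R_0}-1\}=\{\gamma<\gamma_0\}$. With $\gamma=\rho|h|^2$ and $\rho=P/\sigma_{\mathrm{eff}}^2>0$, this becomes $\{|h|^2<\gamma_0/\rho\}$. Here $\rho$ is deterministic: $\sigma_{\mathrm{eff}}^2$ was defined after averaging over the launcher-to-UE fading, so dividing by it does not change the randomness.

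Next I would use the fact that $h\sim\mathcal{CN}(0,\Omega_{\mathrm{eq}})$ makes $|h|^2$ exponential with mean $\Omega_{\mathrm{eq}}$. This was already stated in the Equivalent Channel Distribution theorem. If a self-contained check is wanted, write $h=\sqrt{\Omega_{\mathrm{eq}}/2}\,(a+jb)$ with $a,b$ i.i.d. $\mathcal{N}(0,1)$. Then $2|h|^2/\Omega_{\mathrm{eq}}=a^2+b^2\sim\chi^2_2$, whose CDF is $1-e^{-x/2}$. Hence $F_{|h|^2}(x)=1-e^{-x/\Omega_{\mathrm{eq}}}$ for $x\ge0$.

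Evaluating this CDF at $x=\gamma_0/\rho$ gives $P_{\mathrm{out}}(R_0)=1-\exp(-\gamma_0/(\rho\Omega_{\mathrm{eq}}))$, which is \eqref{eq:Pout_final}. The exponential distribution is continuous, so it makes no difference whether the outage event uses a strict or non-strict inequality. Also, $R_0>0$ gives $\gamma_0>0$, so the expression is nontrivial.

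There is no real obstacle here. The only point needing care is to state explicitly that $\rho$ is a fixed constant, not a random quantity. One could instead condition on the relay noise through the launcher-to-UE channel, but the paper's SNR model averages that noise into $\sigma_{\mathrm{eff}}^2$. Under that model the reduction above is exact.
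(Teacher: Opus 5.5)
Your proposal is correct and matches the paper's proof: you rewrite the outage event as $\{|h|^2<\gamma_0/\rho\}$ and evaluate the exponential CDF of $|h|^2$, with mean $\Omega_{\mathrm{eq}}$, at $x=\gamma_0/\rho$. Your additional remarks are sound details the paper leaves implicit: the $\chi^2_2$ check, the point that strict versus non-strict inequality does not matter, and the point that $\rho$ is deterministic because the relay noise is averaged into $\sigma_{\mathrm{eff}}^2$.
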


\begin{IEEEproof}
The outage event is equivalent to $\gamma < \gamma_{0}$, and thus
\begin{equation}
P_{\mathrm{out}}(R_{0})= \Pr\{\gamma < \gamma_{0}\}= \Pr\left\{|h|^{2} < \frac{\gamma_{0}}{\rho}\right\}.
\end{equation}
Since $|h|^{2}$ is exponential with mean $\Omega_{\mathrm{eq}}$, its cumulative density function (CDF) is expressed as
\begin{equation}
F_{|h|^{2}}(x)=1 -\exp\!\left(-\frac{x}{\Omega_{\mathrm{eq}}}\right),~x\ge 0,
\end{equation}
which, when evaluated at $x=\gamma_{0}/\rho$, yields \eqref{eq:Pout_final}.
\end{IEEEproof}

It is worth pointing out that the influence of the E-FAS environment is characterized by entirely the parameter $\Omega_{\mathrm{eq}}$, which encapsulates the deterministic surface wave propagation effects and any remaining direct path component.

\subsection{Ergodic Capacity}
The ergodic capacity is defined as
\begin{equation}
C=\mathbb{E}\!\left\{\log_{2}(1+\gamma)\right\}=\mathbb{E}\!\left\{\log_{2}(1+\rho |h|^{2})\right\}.
\end{equation}
Given that $|h|^{2}$ is exponentially distributed, the integral admits a well-known closed form.

\begin{theorem}[Ergodic Capacity]
For $h \sim \mathcal{CN}(0,\Omega_{\mathrm{eq}})$ and $\gamma=\rho |h|^{2}$, the ergodic capacity of the  single-user E-FAS considered is obtained as
\begin{equation}\label{eq:Cerg_final}
C=\frac{1}{\ln 2}
			\exp\!\left(
			\frac{1}{\rho \Omega_{\mathrm{eq}}}
			\right)
			\mathrm{E}_{1}\!\left(
			\frac{1}{\rho \Omega_{\mathrm{eq}}}
			\right),
\end{equation}
where $\mathrm{E}_{1}(\cdot)$ denotes the exponential integral of
\begin{align}
\mathrm{E}_{1}(x)=\int_{x}^{\infty}\frac{e^{-t}}{t}\, dt,~x>0.
\end{align}
\end{theorem}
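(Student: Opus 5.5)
The plan is a direct computation. It uses the fact, established by the Equivalent Channel Distribution theorem, that $X\triangleq|h|^{2}$ is exponential with mean $\Omega_{\mathrm{eq}}$, with density $f_X(x)=\Omega_{\mathrm{eq}}^{-1}e^{-x/\Omega_{\mathrm{eq}}}$ for $x\ge 0$. First I rescale to $\gamma=\rho X$. This is exponential with mean $\bar\gamma\triangleq\rho\Omega_{\mathrm{eq}}$, so
\begin{equation*}
C=\frac{1}{\ln 2}\int_{0}^{\infty}\ln(1+\gamma)\,\frac{1}{\bar\gamma}e^{-\gamma/\bar\gamma}\,d\gamma .
\end{equation*}

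Next I integrate by parts, taking $u=\ln(1+\gamma)$ and $dv=\bar\gamma^{-1}e^{-\gamma/\bar\gamma}d\gamma$, and choosing the antiderivative $v=-e^{-\gamma/\bar\gamma}$. The boundary term $[-\ln(1+\gamma)e^{-\gamma/\bar\gamma}]_0^\infty$ vanishes at both ends: at $0$ because $\ln 1=0$, and at $\infty$ because exponential decay dominates logarithmic growth. This leaves
\begin{equation*}
C=\frac{1}{\ln 2}\int_{0}^{\infty}\frac{e^{-\gamma/\bar\gamma}}{1+\gamma}\,d\gamma .
\end{equation*}
I then substitute $t=(1+\gamma)/\bar\gamma$, so that $d\gamma=\bar\gamma\,dt$, $e^{-\gamma/\bar\gamma}=e^{1/\bar\gamma}e^{-t}$, and the lower limit becomes $1/\bar\gamma$. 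The integral becomes $e^{1/\bar\gamma}\int_{1/\bar\gamma}^{\infty}t^{-1}e^{-t}\,dt=e^{1/\bar\gamma}\mathrm{E}_1(1/\bar\gamma)$. Substituting $\bar\gamma=\rho\Omega_{\mathrm{eq}}$ then gives \eqref{eq:Cerg_final}.

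The only step needing care is the justification of the integration by parts. Specifically, I must check that the boundary terms vanish and that the resulting integral converges. Both are immediate, because $1/(1+\gamma)$ is bounded on $[0,\infty)$ and the exponential factor is integrable. Alternatively, Tonelli's theorem can be applied to $\ln(1+\gamma)=\int_0^\gamma (1+s)^{-1}ds$, which swaps the order of integration and avoids boundary terms altogether. Beyond this, the argument is routine. Everything depends only on $\Omega_{\mathrm{eq}}$ through $\bar\gamma$, consistent with the remark following the outage probability theorem.
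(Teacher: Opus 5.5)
Your proposal is correct and follows the paper's route: both write $C$ as the integral of $\ln(1+x)$ against the exponential density of $\gamma$, which has mean $\rho\Omega_{\mathrm{eq}}$. The paper simply invokes the identity $\int_0^\infty \ln(1+x)e^{-ax}\,dx = a^{-1}e^{a}\mathrm{E}_1(a)$ with $a=1/(\rho\Omega_{\mathrm{eq}})$, whereas your integration by parts followed by the substitution $t=(1+\gamma)/\bar\gamma$ actually derives that identity.
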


\begin{IEEEproof}
Since the SNR $\gamma$ follows the exponential distribution, the corresponding probability density function (PDF) is written as
\begin{align}
f_{\gamma}(x)=\frac{1}{\rho \Omega_{\mathrm{eq}}}\exp\!\left(
		-\frac{x}{\rho\Omega_{\mathrm{eq}}}
		\right),~x\ge 0.
\end{align}
Thus, we have
\begin{align}
C=\frac{1}{\rho\Omega_{\mathrm{eq}}\ln 2}
		\int_{0}^{\infty}
		\ln(1+x)
		\exp\!\left(
		-\frac{x}{\rho\Omega_{\mathrm{eq}}}
		\right) dx .
		\end{align}
		Applying the standard identity of
		\begin{align}
		\int_{0}^{\infty}
		\ln(1+x) e^{-ax} dx
		=
		\frac{1}{a} e^{a} E_{1}(a),~a>0,
\end{align}
associated with $a = 1/(\rho \Omega_{\mathrm{eq}})$ yields \eqref{eq:Cerg_final}.
\end{IEEEproof}

\subsection{High-SNR Asymptotics}
The behavior at high SNR provides insight into both the diversity order and coding gain. As $\rho \to \infty$, $1/(\rho\Omega_{\mathrm{eq}})\to 0$, and the exponential integral admits the approximation
\begin{equation}
E_{1}(x)=-\gamma_{E}- \ln(x)+ x+ o(x),
\end{equation}
where $\gamma_{E}$ is Euler's constant. Substituting into \eqref{eq:Cerg_final} yields
\begin{equation}
C\approx\frac{1}{\ln 2}\left[-\gamma_{E}- \ln\!\left(\frac{1}{\rho\Omega_{\mathrm{eq}}}\right)\right]
=\log_{2}(\rho \Omega_{\mathrm{eq}})-\frac{\gamma_{E}}{\ln 2}.
\end{equation}

The pre-log factor equals one, confirming that the diversity order is unity, as expected for a single channel. The parameter $\Omega_{\mathrm{eq}}$ appears multiplicatively inside the logarithm, indicating that improvements in surface wave-guided gain through enhanced routing, optimized launcher placement, or reduced attenuation translate directly into additional coding gain.

\section{Multiuser E-FAS using ZF Precoding} \label{sec:multiuser}
In this section, we extend our analysis to the multiuser downlink ($K \geq 1$) and investigate the performance of ZF precoding at the BS based on the equivalent BS-UE channel matrix derived from the E-FAS-assisted propagation.

\subsection{Equivalent Channel Matrix}
For UE~$u$, the end-to-end channel operating from the BS, combining both the surface wave-guided and direct components, can be represented by an $M\times 1$ vector $\mathbf{h}_u^{\mathrm{eq}}$. Motivated by the single-user analysis and leveraging Lemma~\ref{lem:Tu_gauss}, we model $\mathbf{h}_u^{\mathrm{eq}}$ as a complex Gaussian vector having independent entries and UE-dependent variance, i.e., 
\begin{equation}\label{eq:heq}
\mathbf{h}_u^{\mathrm{eq}}\sim\mathcal{CN}\left(\mathbf{0},\,\Omega_{\mathrm{eq},u}\mathbf{I}_M\right).
\end{equation}

Equation \eqref{eq:heq} follows from the linearity of the end-to-end channel model and the fact that deterministic surface wave routing preserves Gaussianity of the underlying space wave components. Stacking the channels to all $K$ UEs yields the equivalent channel matrix
\begin{equation}
\mathbf{H}_{\mathrm{eq}}=\left[\mathbf{h}_1^{\mathrm{eq}},\dots,\mathbf{h}_K^{\mathrm{eq}}\right]\in \mathbb{C}^{M\times K},
\end{equation}
whose columns are mutually independent but not necessarily identically distributed, since the large-scale parameters $\Omega_{\mathrm{eq},u}$ may differ across the UEs.

To obtain tractable closed-form expressions and highlight the fundamental impact of E-FAS, we assume the symmetric case,\footnote{While the analysis focuses on the symmetric case for analytical clarity, the derived framework directly extends to asymmetric user scenarios with user-dependent variances $\Omega_{\mathrm{eq},u}$, at the expense of more involved expressions.} where all UEs experience the same effective E-FAS-assisted variance, i.e., $\Omega_{\mathrm{eq},u} = \Omega_{\mathrm{eq}}$ for all $u$. Under this assumption, the entries of $\mathbf{H}_{\mathrm{eq}}$ are i.i.d.~$\mathcal{CN}(0,\Omega_{\mathrm{eq}})$, and the multiuser channel reduces to the classical Rayleigh MU-MIMO model, scaled by the E-FAS-induced factor $\Omega_{\mathrm{eq}}$.

\subsection{ZF Precoding and Per-User SNR}
Let the total BS transmit power denote $P$, equally allocated among the $K$ UEs so that $P_u = P/K$. With ZF precoding, the BS employs a precoder matrix
\begin{equation}
\mathbf{W}=[\mathbf{w}_1,\dots,\mathbf{w}_K]=\mathbf{H}_{\mathrm{eq}}\big(\mathbf{H}_{\mathrm{eq}}^{H}\mathbf{H}_{\mathrm{eq}}\big)^{-1}\mathbf{D},
\end{equation}
where the diagonal matrix $\mathbf{D}$ is chosen to satisfy a desired power normalization. Specifically, in this work, we focus on column-normalized ZF precoding, where each precoding vector satisfies $\|\mathbf{w}_u\|^{2}=1$. In the ideal ZF regime, the multiuser interference is perfectly cancelled, and the  signal received at UE~$u$ reduces to
\begin{equation}
y_u=\sqrt{P_u}\,\mathbf{h}_u^{\mathrm{eq},H}\mathbf{w}_u x_u+ z_{u,\mathrm{relay}}+ n_u ,
\end{equation}
so that the instantaneous post-ZF SNR is
\begin{equation}\label{eq:SNR-zf}
\gamma_u^{\mathrm{ZF}}=\frac{P_u|\mathbf{h}_u^{\mathrm{eq},H}\mathbf{w}_u|^{2}}{\sigma_{\mathrm{eff}}^{2}}.
\end{equation}

It is useful to express (\ref{eq:SNR-zf}) in terms of the Gram matrix
\begin{equation}
\mathbf{G}=\mathbf{H}_{\mathrm{eq}}^{H}\mathbf{H}_{\mathrm{eq}}\in \mathbb{C}^{K\times K}.
\end{equation}
Under conventional ZF with column normalization, and for i.i.d.~Rayleigh fading with variance $\Omega_{\mathrm{eq}}$, the post-ZF SNR can be written in distribution as
\begin{equation}\label{eq:ZF_SNR_gram}
\gamma_u^{\mathrm{ZF}}\stackrel{d}{=}\frac{P/K}{\sigma_{\mathrm{eff}}^{2}}\frac{1}{\big[(\mathbf{G}/\Omega_{\mathrm{eq}})^{-1}\big]_{u,u}}.
\end{equation}

\begin{theorem}[ZF SNR Distribution]
Under an isotropic precoder, with $M \ge K$ and equal power allocation, the per-user ZF SNR $\gamma_u^{\mathrm{ZF}}$ is gamma-distributed with PDF
\begin{equation}
f_{\gamma_u^{\mathrm{ZF}}}(x)=\frac{1}{\Gamma(m)\theta^{m}}x^{m-1} e^{-x/\theta},~x \ge 0,
\end{equation}
where the shape parameter is $m = M-K+1$ and the scale parameter is given by
\begin{equation}
\theta=\frac{P}{K\sigma_{\mathrm{eff}}^{2}}\,\Omega_{\mathrm{eq}}.
\end{equation}
\end{theorem}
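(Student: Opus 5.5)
The plan is to start from the Gram-matrix representation \eqref{eq:ZF_SNR_gram} and reduce it to a standard complex Wishart fact. By \eqref{eq:heq} and the symmetric-case assumption, we can write $\mathbf{H}_{\mathrm{eq}}=\sqrt{\Omega_{\mathrm{eq}}}\,\mathbf{Z}$, where $\mathbf{Z}\in\mathbb{C}^{M\times K}$ has i.i.d.\ $\mathcal{CN}(0,1)$ entries. Then $\mathbf{G}/\Omega_{\mathrm{eq}}=\mathbf{Z}^H\mathbf{Z}$ is a central complex Wishart matrix $\mathcal{W}_K(M,\mathbf{I}_K)$, so
\begin{equation*}
\gamma_u^{\mathrm{ZF}}\stackrel{d}{=}\frac{P}{K\sigma_{\mathrm{eff}}^2}\,\Omega_{\mathrm{eq}}\cdot\frac{1}{[(\mathbf{Z}^H\mathbf{Z})^{-1}]_{u,u}}.
\end{equation*}
I would first double-check that this matches column-normalized ZF. The ZF direction for UE $u$ is the component of $\mathbf{h}_u^{\mathrm{eq}}$ orthogonal to the span of the other users' channels. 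Its squared inner product with $\mathbf{h}_u^{\mathrm{eq}}$ after unit normalization equals $\|\mathbf{P}_u^{\perp}\mathbf{h}_u^{\mathrm{eq}}\|^2$, where $\mathbf{P}_u^{\perp}$ is the orthogonal projector onto the complement of $\mathrm{span}\{\mathbf{h}_i^{\mathrm{eq}}\}_{i\neq u}$. This quantity is exactly $1/[\mathbf{G}^{-1}]_{u,u}$ by the Schur-complement identity for the $(u,u)$ entry of an inverse Gram matrix. So the whole task reduces to showing that $1/[(\mathbf{Z}^H\mathbf{Z})^{-1}]_{u,u}\sim\mathrm{Gamma}(M-K+1,1)$.

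The key step, which I expect to be the only real obstacle, is to compute the distribution of $\|\mathbf{P}_u^\perp\mathbf{z}_u\|^2$, where $\mathbf{z}_u$ is the $u$-th column of $\mathbf{Z}$. The argument proceeds as follows. First condition on the other $K-1$ columns. These are almost surely linearly independent since $M\ge K$, so $\mathbf{P}_u^\perp$ is a deterministic rank-$(M-K+1)$ orthogonal projector. Because $\mathbf{z}_u$ is independent of these columns and $\mathcal{CN}(\mathbf{0},\mathbf{I}_M)$ is unitarily invariant, we can write $\mathbf{P}_u^\perp=\mathbf{Q}\mathbf{Q}^H$ with $\mathbf{Q}\in\mathbb{C}^{M\times(M-K+1)}$ having orthonormal columns. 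Then $\mathbf{Q}^H\mathbf{z}_u\sim\mathcal{CN}(\mathbf{0},\mathbf{I}_{M-K+1})$, and $\|\mathbf{P}_u^\perp\mathbf{z}_u\|^2=\|\mathbf{Q}^H\mathbf{z}_u\|^2$ is a sum of $m=M-K+1$ i.i.d.\ unit-mean exponential variables. That sum is $\mathrm{Gamma}(m,1)$. The conditional law does not depend on the conditioning columns, so it is also the unconditional law. Alternatively, one could cite the known result that $1/[\mathbf{W}^{-1}]_{u,u}$ is $\chi^2_{2(M-K+1)}/2$ for a complex Wishart $\mathbf{W}$; I would still include the projection argument because it is short and self-contained.

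Finally, scaling a $\mathrm{Gamma}(m,1)$ variable by the deterministic constant $\theta=P\Omega_{\mathrm{eq}}/(K\sigma_{\mathrm{eff}}^2)$ gives a $\mathrm{Gamma}(m,\theta)$ variable with the stated PDF. Here $\sigma_{\mathrm{eff}}^2$ is treated as deterministic, as in the single-user section. Two sanity checks confirm the result. For $K=1$ it gives $m=M$. And the mean $m\theta$ grows linearly in $\Omega_{\mathrm{eq}}$, consistent with the E-FAS macro-gain acting as a pure scale factor.
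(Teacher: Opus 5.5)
Your proposal is correct and follows the same route as the paper. Both reduce the column-normalized ZF SNR to the reciprocal of a diagonal entry of an inverse complex Wishart matrix and identify it as $\mathrm{Gamma}(M-K+1,\theta)$; the paper cites this as a well-known chi-squared fact, while you prove it with the Schur-complement/projection argument. Your direct check from \eqref{eq:SNR-zf} is also the right anchor for the scale: \eqref{eq:ZF_SNR_gram} as printed loses the factor $\Omega_{\mathrm{eq}}$ (with your $\mathbf{Z}$, $1/[(\mathbf{G}/\Omega_{\mathrm{eq}})^{-1}]_{u,u}=1/[(\mathbf{Z}^H\mathbf{Z})^{-1}]_{u,u}$), whereas $1/[\mathbf{G}^{-1}]_{u,u}$ carries it and yields the stated $\theta$.
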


\begin{IEEEproof}
For an $M\times K$ matrix with i.i.d.~$\mathcal{CN}(0,\Omega_{\mathrm{eq}})$ entries and column-normalized ZF precoding, the post-ZF SNR of each UE is proportional to the inverse of a diagonal element of $(\mathbf{H}_{\mathrm{eq}}^{H}\mathbf{H}_{\mathrm{eq}})^{-1}$. It is well known that the reciprocal of this quantity follows a chi-squared distribution with $2(M-K+1)$ DoF, which corresponds to a gamma distribution with shape parameter $M-K+1$. The scale parameter is determined by the per-user transmit power $P/K$, the effective noise variance $\sigma_{\mathrm{eff}}^{2}$, and the channel variance $\Omega_{\mathrm{eq}}$. As a result, the stated result is obtained and the proof is complete.
\end{IEEEproof}

\subsection{ZF Ergodic Sum-Rate}
The ergodic sum-rate under ZF precoding is given by
\begin{equation}
R_{\mathrm{sum}}^{\mathrm{ZF}}=\sum_{u=1}^{K}\mathbb{E}\left\{\log_{2}\left(1+\gamma_u^{\mathrm{ZF}}\right)\right\}.
\end{equation}
In the symmetric case, all UEs share the same distribution, and the expression simplifies to
\begin{equation}
R_{\mathrm{sum}}^{\mathrm{ZF}}=K\,\mathbb{E}\left\{\log_{2}\left(1+\gamma^{\mathrm{ZF}}\right)\right\},
\end{equation}
where $\gamma^{\mathrm{ZF}}$ denotes a generic  random gamma variable having the PDF given above.

A closed-form expression for $R_{\mathrm{sum}}^{\mathrm{ZF}}$ in elementary functions is in general not available for this expectation. However, an accurate and insightful approximation in the moderate-to-high SNR regime replaces the random SNR by its mean. Since
\begin{equation}
\mathbb{E}\left\{\gamma^{\mathrm{ZF}}\right\}=m\theta=(M-K+1)\frac{P}{K\sigma_{\mathrm{eff}}^{2}}\Omega_{\mathrm{eq}},
\end{equation}
the ergodic sum-rate can be approximated as
\begin{equation}\label{eq:Rsum_ZF_approx}
R_{\mathrm{sum}}^{\mathrm{ZF}}\approx K \log_{2}\left(1 + (M-K+1)\frac{P}{K\sigma_{\mathrm{eff}}^{2}}\Omega_{\mathrm{eq}}\right).
\end{equation}
This approximation is accurate in the moderate-to-high SNR regime and when the number of BS antennas exceeds the number of UEs, as confirmed by the results in Section \ref{sec:numerical}.

The expression in \eqref{eq:Rsum_ZF_approx} reveals that the impact of E-FAS on the multiuser sum-rate is captured entirely through the effective channel variance $\Omega_{\mathrm{eq}}$, which reflects the surface wave-assisted gain and cylindrical path-loss behavior. For a fixed $\Omega_{\mathrm{eq}}$ and sufficiently large antenna surplus $M-K$, the sum-rate scales approximately linearly with $K$ at high SNR, consistent with classical ZF-based MU-MIMO systems. More refined approximations can be obtained by integrating the gamma density against $\log_{2}(1+x)$ but such refinements do not alter the fundamental scaling behavior highlighted by \eqref{eq:Rsum_ZF_approx}.

\section{Numerical Results}\label{sec:numerical}
In this section, we evaluate the performance of the proposed E-FAS-assisted wireless system through Monte-Carlo simulations and compare the results with the analytical expressions derived. Unless otherwise stated,  the system parameters are set as $M=16$, $K=4$, $\mathrm{SNR}=10$ dB, and  $R_0=1$ bps/Hz. The equivalent end-to-end channel variance is modeled as $\Omega_{\mathrm{eq}}=\beta_{\rm DL}+\Omega_{\mathrm{sw}}$, where $\beta_{\rm DL}=0.01$ denotes the large-scale gain of the direct BS-UE link and $\Omega_{\mathrm{sw}}\in\{1,5,10\}$ represents the surface wave-assisted gain provided by E-FAS. The benchmark operating without E-FAS corresponds to direct transmission only, i.e., $\Omega_{\mathrm{sw}}=0$, so that $\Omega_{\mathrm{eq}}=\beta_{\rm DL}$. Equal power allocation among UEs is assumed for all multiuser transmissions and the noise variance is normalized to unity. Additionally, for multiuser transmission, ZF precoding with column normalization is employed.

Fig.~\ref{fig:op_s} depicts the single-user outage probability as a function of the SNR for the baseline system without E-FAS and for E-FAS-assisted transmission with different surface wave gains $\Omega_{\mathrm{sw}}$. We see that for all considered values of $\Omega_{\mathrm{sw}}$, the analytical outage expressions closely match the simulations over the entire SNR range, validating the accuracy of the derived closed-form outage probability. The benchmark case without E-FAS exhibits a significantly higher outage probability, particularly in the moderate-to-high SNR regime. This behavior is attributed to the weak direct BS-UE link, characterized by the small large-scale gain $\beta_{\rm DL}$, which severely limits link reliability even at high transmit powers. In contrast, E-FAS leads to a pronounced improvement in outage performance. As $\Omega_{\mathrm{sw}}$ increases, the outage curves shift markedly to the left, indicating substantial SNR gains at a given outage level. This improvement stems from the additional surface wave-assisted propagation path, which enhances the effective channel variance $\Omega_{\mathrm{eq}}$ and mitigates the detrimental effects of blockage and path loss associated with the direct link. Moreover, the approximately parallel slopes of the outage curves in the high-SNR regime indicate that E-FAS primarily provides a power gain rather than an increase in diversity order, which remains unity for single-user transmission.

\begin{figure}[]
\centering
\includegraphics[width=1\columnwidth]{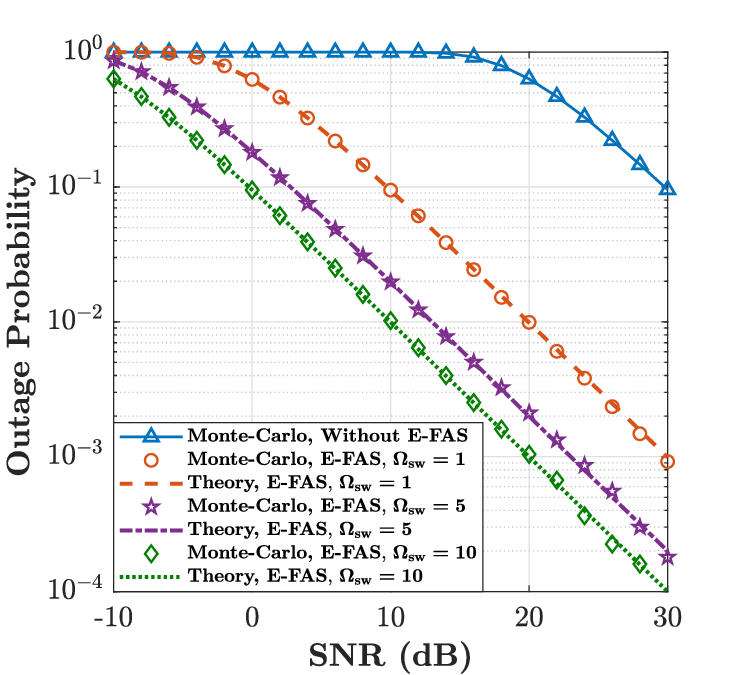}
\caption{Single-user outage probability versus SNR with different surface wave gains $\Omega_{\mathrm{sw}}$.}\label{fig:op_s}
\end{figure}

Fig.~\ref{fig:ec_s} illustrates the ergodic capacity of the single-user downlink in terms of the SNR for different surface wave gains $\Omega_{\mathrm{sw}}$. It is seen that the baseline system operating without E-FAS achieves a very limited capacity, especially in the low- and moderate-SNR regimes, due to the weak direct BS-UE link characterized by the small large-scale gain $\beta_{\rm DL}$. By contrast, the incorporation of E-FAS yields a substantial capacity enhancement across the entire SNR range. Moreover, increasing the surface wave gain $\Omega_{\mathrm{sw}}$ leads to a nearly uniform upward shift of the capacity curves, indicating that E-FAS primarily improves the effective channel strength rather than altering the fundamental scaling behavior with SNR. In the high-SNR regime, the capacity grows approximately logarithmically with SNR, as expected, with the E-FAS-induced gain appearing as an additive offset in the capacity expression.  Therefore, these results demonstrate that E-FAS can significantly enhance spectral efficiency by compensating for severe path loss and blockage in the direct transmission link.

\begin{figure}[]
\centering
\includegraphics[width=1\columnwidth]{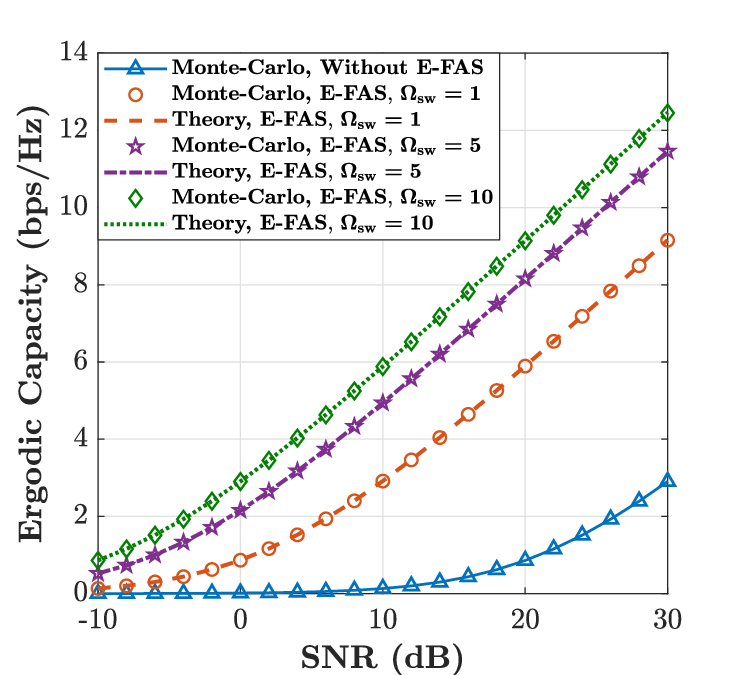}
\caption{Single-user ergodic capacity versus SNR with different surface wave gains $\Omega_{\mathrm{sw}}$.}\label{fig:ec_s}
\end{figure}

Fig.~\ref{fig:dis} illustrates the statistical characterization of the per-user ZF SINR in the E-FAS-assisted multiuser downlink, in which Fig.~\ref{fig:dis}(a) illustrates the PDF and Fig.~\ref{fig:dis}(b) depicts the corresponding CDF. The Monte-Carlo simulation results are compared with the analytical approximation derived based on the gamma distribution. It is observed that the analytical curves closely follow the simulation results for both the PDF and the CDF, demonstrating an excellent agreement across the entire range of SINR values. This confirms the accuracy of the  distribution derived for the per-user ZF SINR under the  E-FAS-assisted equivalent channel model assumed and validates the use of the gamma approximation for performance analysis. Furthermore, the unimodal shape of the PDF in Fig.~\ref{fig:dis}(a) reflects the concentration of the ZF SINR around its mean value, while the smooth and steep transition of the CDF in Fig.~\ref{fig:dis}(b) indicates relatively low variability in the post-ZF SINR. These characteristics are consistent with the behavior of ZF precoding in systems with a sufficient antenna surplus at the BS. Consequently, the results in Fig.~\ref{fig:dis} provide strong evidence that the proposed analytical framework accurately captures the key statistical properties of the ZF SINR in E-FAS-assisted multiuser transmission.

\begin{figure}[]
\centering
\includegraphics[width=1\columnwidth]{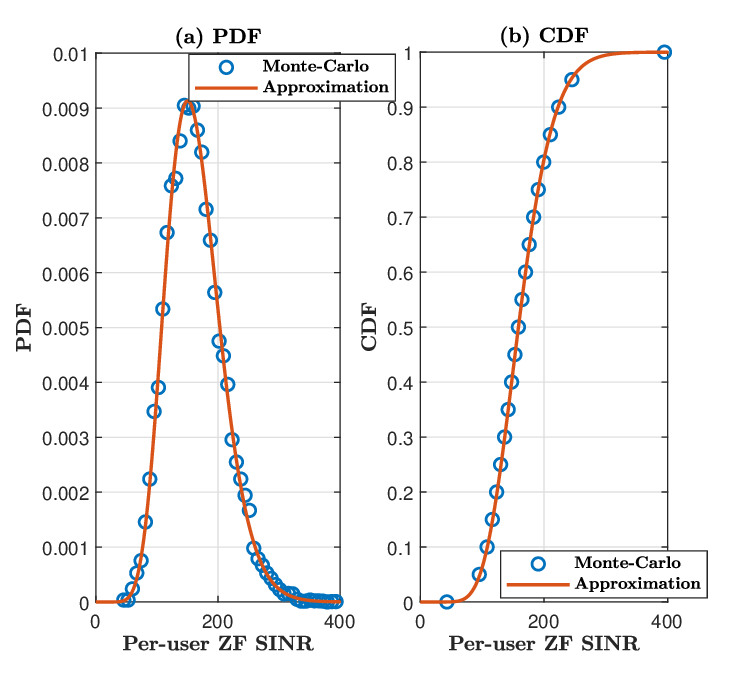}
\caption{Statistical characterization of the per-user ZF SINR in the E-FAS-assisted multiuser downlink: (a) PDF and (b) CDF. }\label{fig:dis}
\end{figure}

Fig.~\ref{fig:sr_m} illustrates the ZF sum-rate of the multiuser downlink as a function of the SNR for $M=16$ and $K=4$, under different surface wave gains $\Omega_{\mathrm{sw}}$. The analytical approximation is seen to closely match the Monte-Carlo simulation results over the entire SNR range, validating the accuracy of the proposed mean-SINR-based characterization. For a fixed number of UEs, increasing $\Omega_{\mathrm{sw}}$ results in an almost uniform upward shift of the sum-rate curves, indicating that E-FAS enhances the effective channel quality experienced after ZF processing without altering the underlying ZF scaling behavior.


\begin{figure}[]
\centering
\includegraphics[width=1\columnwidth]{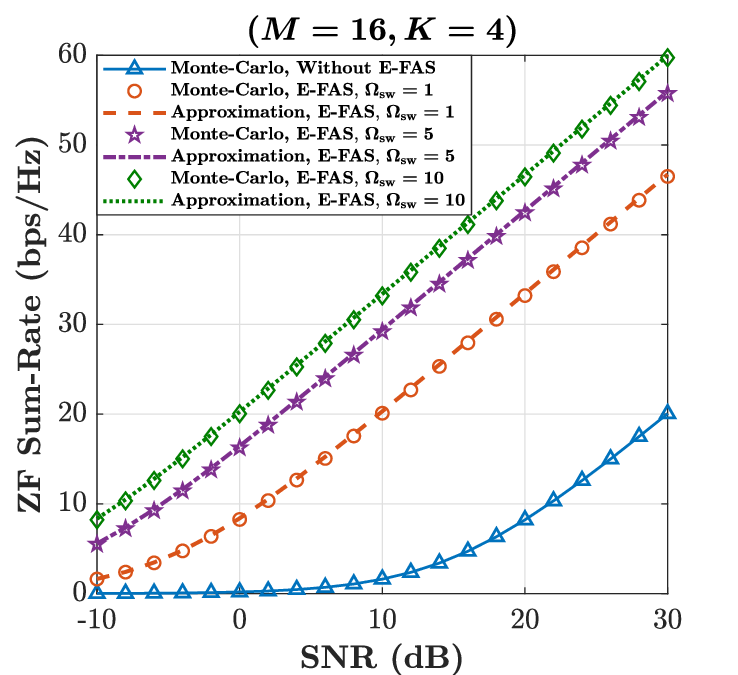}
\caption{ZF sum-rate versus SNR for the multiuser downlink with $M=16$ and $K=4$, with different surface wave gains $\Omega_{\mathrm{sw}}$.}\label{fig:sr_m}
\end{figure}

Fig.~\ref{fig:sr_k} represents the ZF sum-rate in terms of the number of UEs $K$ for a fixed number of BS antennas $M=16$ and a fixed surface wave gain $\Omega_{\mathrm{sw}}=5$, under different SNR levels. The figure demonstrates a clear trade-off between spatial multiplexing and inter-user interference suppression under ZF precoding. For each SNR value, the sum-rate initially increases with $K$ due to the multiplexing gain offered by serving more UEs. However, beyond a certain point, further increasing $K$ leads to a reduction in the sum-rate, as the effective post-ZF SINR degrades when the number of UEs approaches the available spatial DoF at the BS. This behavior highlights the fundamental limitation imposed by the finite antenna array size. Additionally, the optimal operating point in terms of the number of UEs increases with SNR, indicating that higher SNRs can support more aggressive spatial multiplexing.

\begin{figure}[]
\centering
\includegraphics[width=1\columnwidth]{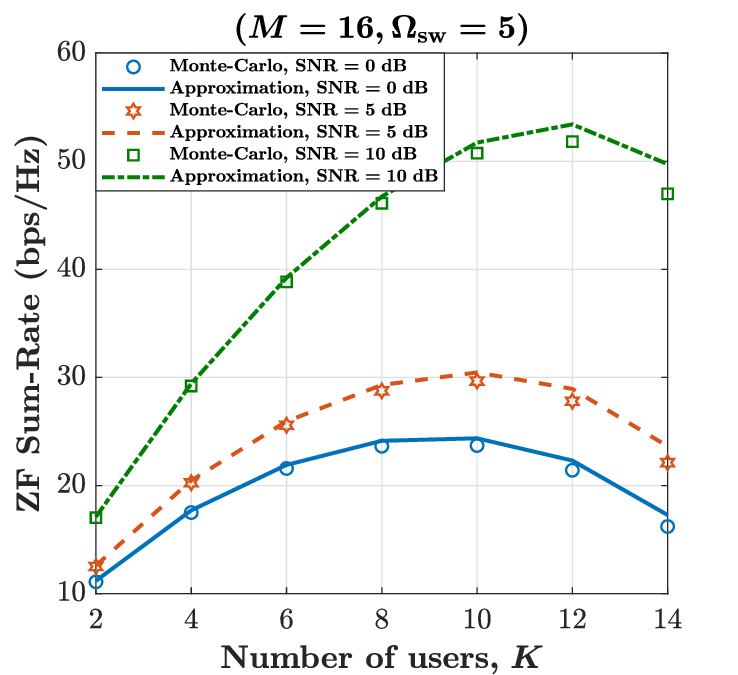}
\caption{ZF sum-rate versus the number of UEs $K$ for $M=16$ and $\Omega_{\mathrm{sw}}=5$, under different SNR values.}\label{fig:sr_k}\vspace{-5mm} 
\end{figure}

Fig.~\ref{fig:sr_a} depicts the ZF sum-rate versus the number of UEs $K$ at a fixed SNR of $5$~dB and $\Omega_{\mathrm{sw}}=5$, for different numbers of BS antennas $M$. The figure highlights the beneficial impact of antenna resources on the user-loading behavior. It is observed that increasing $M$ yields a consistent improvement in sum-rate across the entire range of $K$, and, importantly, delays the onset of the rate degradation at high user loads. This is because a larger antenna array provides additional spatial DoF, increasing $M-K+1$ and thereby strengthening the post-ZF SINR for each UE. Consequently, systems with larger $M$ can support more UEs before the ZF penalty becomes dominant. The analytical approximation tracks the Monte-Carlo simulation results closely, confirming that the proposed framework captures the dependence of the ZF sum-rate on both the UE loading $K$ and the antenna surplus $M-K$.

\begin{figure}[]
\centering
\includegraphics[width=1\columnwidth]{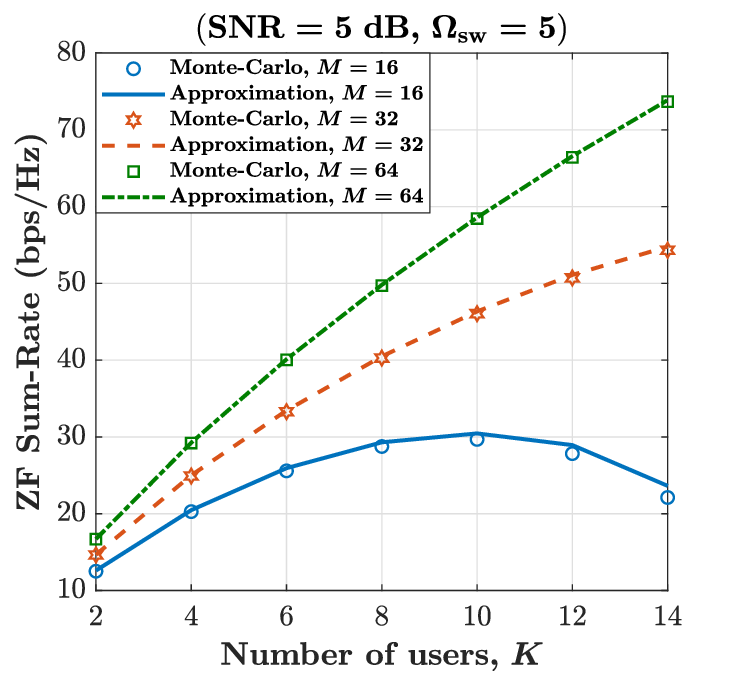}
\caption{ZF sum-rate versus the number of UEs $K$ at $\mathrm{SNR}=5$~dB and $\Omega_{\mathrm{sw}}=5$, for different numbers of BS antennas $M$.}\label{fig:sr_a}
\end{figure}

\section{Conclusion}\label{sec:conclusion}
A tractable performance analysis for E-FAS-assisted MU-MIMO downlink transmission was developed and presented. By combining physics-based surface wave propagation with conventional small-scale fading on the space wave segments, we showed that the resultant end-to-end BS-UE channel admits a compact equivalent representation. In particular, we revealed that despite the layered propagation structure of E-FAS, the effective channel remains complex Gaussian, with an average power that captures the surface wave-assisted gain and the reduced attenuation associated with quasi-two-dimensional (2D) propagation. Building on this result, closed-form expressions were derived for the outage probability and ergodic capacity for the single-user case, together with asymptotic characterizations that demonstrate how E-FAS improves the coding gain, while preserving the fundamental diversity order. For the multiuser scenario, the post-processing SINR under ZF precoding was characterized, leading to accurate approximations of the ergodic sum-rate that reveal the joint impact of E-FAS-induced channel enhancement and spatial DoF at the BS. Monte-Carlo simulations were shown to closely match the analytical results across a wide range of system parameters, thereby validating the proposed framework. Therefore, the analysis developed bridges the gap between EM-level E-FAS modeling and classical information-theoretic performance metrics, enabling fair and physically meaningful benchmarking against conventional space wave and RIS-based architectures. 

\bibliographystyle{IEEEtran}

\end{document}